\Crefname{figure}{fig}{figures}
\Crefname{figure}{Fig}{Figures}
\newcommand{\blue}{\textcolor{black}}
\title{Registered Attribute-Based Encryption with Reliable Outsourced Decryption Based on Blockchain}
\shorttitle{Registered ABE with Reliable Outsourced Decryption}
\author[1]{Dongliang~CAI}
\author[2,4]{Liang~ZHANG}
\author[1]{Borui~CHEN}
\author[1,3,+]{Haibin~KAN}
\address[1]{College of Computer Science and Artificial Intelligence, Fudan University, Shanghai 200433, China}
\address[2]{Department of Industrial Engineering and Decision Analytics, Hong Kong University of Science and Technology, Hong Kong}
\address[3]{Shanghai Institute for Mathematics and Interdisciplinary Sciences, Shanghai 200433, China}
\address[4]{School of Cyberspace Security (School of Cryptology), Hainan University, Haikou, 570228, China}
\begin{abstract}
Decentralized data sovereignty and secure data exchange are regarded as foundational pillars of the new era. Attribute-based encryption (ABE) is a promising solution that enables fine-grained access control in data sharing. Recently, Hohenberger et al. (Eurocrypt 2023) introduced registered ABE (RABE) to eliminate trusted authority and gain decentralization. Users generate their own public and secret keys and then register their keys and attributes with a transparent key curator. However, RABE still suffers from heavy decryption overhead. A natural approach to address this issue is to outsource decryption to a decryption cloud server (DCS). In this work, we propose the first auditable RABE scheme with reliable outsourced decryption (ORABE) based on blockchain. First, we achieve verifiability of transform ciphertext via a verifiable tag mechanism. Then, the exemptibility, which ensures that the DCS escapes false accusations, is guaranteed by zero knowledge fraud proof under the optimistic assumption. Additionally, our system achieves fairness and auditability to protect the interests of all parties through blockchain. Finally, we give concrete security and theoretical analysis and evaluate our scheme on Ethereum to demonstrate feasibility and efficiency.
\end{abstract}
\keywords{Registered Attribute-Based Encryption; Outsourced Decryption; Blockchain; Non-Interactive Zero Knowledge Proof.}
\begin{document}

\section{Introduction}

As the digital world evolves, the metaverse is gaining traction\cite{gan2023web}, creating new opportunities and challenges for data sharing. The metaverse requires a decentralized and user-centric data sharing framework. Ciphertext-policy attribute-based encryption (CP-ABE)\cite{bethencourt2007ciphertext} provides a promising primitive to achieve fine-grained access control for data sharing by embedding access structure directly into the encrypted data and associating attributes with decryption keys.

CP-ABE relies on a trusted authority that maintains a master secret key to issue decryption keys\cite{bethencourt2007ciphertext}, limiting its wide adoption in a decentralized environment. If the central authority is compromised, the adversary would be able to decrypt all ciphertexts within the system. The proposal of decentralized ABE still relies on multiple authorities\cite{lewko2011decentralizing}. Therefore, Hohenberger et al.\cite{hohenberger2023registered} proposed a registered ABE (RABE) scheme without an authority. In RABE, users generate their secret and public keys and register their keys and attributes with a key curator. The curator is transparent, keeps no secrets, and its malicious behavior can always be detected. Subsequently, several constructions of RABE were proposed\cite{garg2024reducing,FWW2023use,ZZGQ2023registered,attrapadung2024modular,weng2025subpolicy,li2025revocablerabe} to enhance the original scheme. The pairing-based constructions are more efficient, but many of them \cite{hohenberger2023registered,ZZGQ2023registered,attrapadung2024modular} have a large and impractical common reference string (CRS). Garg et al.\cite{garg2024reducing} reduced the size of CRS with progression-free sets, which makes the scheme more practical. \blue{Considering real-world applications, Weng et al. \cite{weng2025subpolicy} reduced the decryption overhead for ciphertexts with the same sub-policy in RABE, while Li et al. \cite{li2025revocablerabe} proposed a revocable RABE scheme to enhance flexibility.}

However, ABE schemes\cite{bethencourt2007ciphertext,goyal2006attribute,waters2011ciphertext,hohenberger2023registered, garg2024reducing} suffer from high decryption overhead, which poses a significant challenge for lightweight devices. To address this issue, Green et al. \cite{green2011outsourcing} proposed an outsourced decryption scheme for attribute-based encryption (OABE). OABE reduces the computational overhead for users without revealing information about the original data. In their scheme, the decryption cloud server (DCS) transforms the ciphertext into an ElGamal ciphertext \cite{elgamal1985public}, which the user then decrypts to retrieve the plaintext. Since the DCS may not perform the computation honestly, many verifiable OABE schemes have been proposed\cite{lai2013attribute,li2013securely,qin2015attribute,lin2015revisiting,miao2023verifiable}. \blue{Xu et al. \cite{xu2015circuit} and Li et al. \cite{li2017full} extended verifiability to ensure that both authorized and unauthorized users can verify the correctness of the outsourced decryption, which is called full verifiability in \cite{li2017full}.} Nevertheless, many previous schemes fail to guarantee exemptibility and fairness. The exemptibility property ensures that the DCS is not falsely accused if it returns a correct transform ciphertext. The fairness property ensures that the DCS gets paid if and only if it returns a correct result. To tackle this issue, Cui et al. \cite{cui2020pay} introduced a payable OABE scheme based on blockchain \cite{nakamoto2008bitcoin}. 
Due to the lack of support for pairing computation in Ethereum\cite{wood2014ethereum}, the OABE scheme proposed by Ge et al. \cite{ge2023attribute} cannot be directly integrated with Ethereum platform.

To the best of our knowledge, no outsourced decryption scheme has been proposed for RABE. Moreover, there is a lack of a reliable and efficient outsourced decryption scheme for ABE.
This gap motivates our work. In this paper, we develop the first reliable and efficient outsourced decryption solution for the RABE scheme proposed by Garg et al. \cite{garg2024reducing} based on Ethereum blockchain.
Our contributions of this paper are fourfold:
\begin{itemize}
    \item We propose the first auditable RABE scheme with reliable outsourced decryption (ORABE) based on blockchain, which reduces the user's decryption overhead to only one group exponentiation operation without \blue{requiring} pairing operations.
    \item We achieve verifiability of transform ciphertext by a verifiable tag mechanism\cite{qin2015attribute} and ensure exemptibility via non-interactive zero knowledge (NIZK) proof under the optimistic assumption. Moreover, auditability and fairness are ensured based on decentralized blockchain and the payment can be processed by cryptocurrency.
    \item We propose a decentralized and auditable data sharing framework without a trusted authority. The reliable outsourced decryption makes it friendly for user-centric lightweight devices.
    \item We give a concrete security analysis of the proposed scheme and implement it on Ethereum to evaluate its feasibility and performance.
\end{itemize}

\section{Related Work}
\subsection{Outsourced Decryption of Attribute-Based Encryption} ABE was first introduced as fuzzy identity-based encryption by Sahai and Waters in \cite{sahai2005fuzzy}. In CP-ABE, the user can decrypt only if the set of attributes satisfies the access policy of ciphertext. However, the high decryption overhead in ABE greatly hinders its real-world adoption. To mitigate this issue, Green et al. \cite{green2011outsourcing} proposed an OABE scheme, which outsources the majority of the decryption to a third party without revealing the plaintext. Consequently, Lai et al. \cite{lai2013attribute} extended the OABE model to achieve verifiability of transform ciphertext. Their approach appends a ciphertext of a random message along with a tag to the original ciphertext, which \blue{doubles} both the ciphertext size and the transformation cost. Qin et al.\cite{qin2015attribute} reduced the ciphertext size and achieved faster decryption than the scheme proposed in \cite{lai2013attribute}. Lin et al. \cite{lin2015revisiting} proposed a more efficient and generic construction of ABE with verifiable outsourced decryption based on an attribute-based key encapsulation mechanism and a symmetric-key encryption scheme. \blue{Li et al. \cite{li2017full} extended verifiability to full verifiability, ensuring that both authorized and unauthorized users can verify, which encrypts the plaintext with two access policies for authorized and unauthorized users, respectively.}
Furthermore, Li et al. \cite{li2013securely} proposed an ABE scheme with outsourced encryption and decryption, but it only supports the threshold access policy. Following their work, Ma et al. \cite{ma2015verifiable} extended it to accommodate any monotonic access policy and established a stronger security framework compared to \cite{lai2013attribute}. However, the exemptibility they claimed is not achieved since the user may reveal an incorrect private key and accuse the DCS falsely. None of the aforementioned schemes can achieve both exemptibility and fairness. To tackle this issue, Cui et al. \cite{cui2020pay} introduced a payable OABE scheme that leverages blockchain technology to achieve exemptibility and fairness. However, it still introduces redundant information. Ge et al. \cite{ge2023attribute} eliminated redundant information and mitigated the heavy computation problem of smart contracts through decomposition, but their solution still suffers from high gas consumption and performance limitation. Furthermore, due to the lack of support for pairing computation in native Ethereum\cite{wood2014ethereum}, this scheme is incompatible with Ethereum platform.

\subsection{Registered Attribute-Based Encryption} RABE\cite{hohenberger2023registered} was introduced recently as a significant extension of ABE, addressing the key escrow problem by replacing the trusted authority that maintains a master secret key to issue attribute keys in traditional ABE with a transparent key curator. In RABE system, users select their public and secret keys independently, and the key curator aggregates users' key-attribute pairs into a master public key. A user encrypts a message with an access policy using the master public key to generate a ciphertext. Any user whose attributes satisfy the access policy in the ciphertext can decrypt it using a helper decryption key and their own secret key. Hohenberger et al. \cite{hohenberger2023registered} introduced the concept of RABE and provided a construction based on composite order group. Subsequently, several constructions of RABE \cite{garg2024reducing,DP2023registration,datta2025registered,FWW2023use,ZZGQ2023registered,FFM2023registered,attrapadung2024modular} were proposed to enhance the original scheme. Some constructions\cite{garg2024reducing,DP2023registration,datta2025registered,FWW2023use,FFM2023registered} require non-black-box use of cryptography, whereas pairing-based constructions\cite{garg2024reducing,FFM2023registered,ZZGQ2023registered,attrapadung2024modular} do not. The pairing-based constructions are more efficient, but many of them \cite{hohenberger2023registered,FFM2023registered,ZZGQ2023registered,attrapadung2024modular} have a large CRS size. Garg et al.\cite{garg2024reducing} reduced it to $L^{1+o(1)}$ ($L$ denotes the number of users) with progression-free sets, thereby making the scheme more practical. \blue{Many works explored the improvements of RABE in practical applications recently. Weng et al. \cite{weng2025subpolicy} presented an efficient RABE scheme for ciphertexts with the same sub-policy in cloud storage. The experimental result shows decryption efficiency is significantly improved with a minimal storage trade-off. Motivated by the revocable ABE scheme\cite{chen2023efficient}, Li et al.\cite{li2025revocablerabe} proposed a revocable RABE scheme with user deregistration to improve flexibility.} However, existing pairing-based constructions still suffer from high decryption overhead, which is unfriendly for user-centric lightweight devices. Therefore, we build upon construction\cite{garg2024reducing} and develop a reliable outsourced decryption scheme aimed at minimizing the decryption burden for users.

\section{Preliminaries}
\label{sec:Preliminaries}
\subsection{Bilinear Map}
Let $\mathbb{G}, \mathbb{G}_T$ be two cyclic groups of prime order $p$, $e$ be the bilinear map $\mathbb{G} \times \mathbb{G} \rightarrow \mathbb{G}_T$, and $g$ be the generator of $\mathbb{G}$. The bilinear map has three properties:
\begin{itemize}
    \item \textit{Bilinearity}: $e(g^a,g^b) = e(g,g)^{ab}$ holds for all $a,b \in \mathbb{Z}_{p}^{*}$.
    \item \textit{Non-degeneracy}: $e(g,g) \neq 1$.
    \item \textit{Computability}: $e(g,g)$ can be computed in polynomial time.
\end{itemize}

\subsection{Outsourced Decryption CP-ABE}

\begin{definition}
(\textbf{Outsourced Decryption CP-ABE}\cite{green2011outsourcing}).
An outsourced decryption CP-ABE scheme is a tuple of efficient algorithms $\prod_{OABE} =$ $(\textbf{Setup}, \textbf{KeyGen}, \textbf{Encrypt}, \textbf{TKGen}, \textbf{Transform},$ $\textbf{Decrypt}_{user})$ as follows.
\begin{itemize}
    \item $\textbf{Setup}(\lambda, \mathcal{U}) \rightarrow PK,MSK$. The algorithm takes a security parameter $\lambda$ and a universe description $\mathcal{U}$ as inputs. It outputs a public key \blue{$PK$} and a master key \blue{$MSK$}.
    \item $\textbf{KeyGen}(MSK,S) \rightarrow SK$. The algorithm takes the master key \blue{$MSK$} and a set of attributes \blue{$S$} as inputs. It outputs a secret key \blue{$SK$}.
    \item $\textbf{Encrypt}(m, PK, (M, \rho)) \rightarrow CT$. The algorithm takes a message \blue{$m$}, the public key \blue{$PK$}, and a linear secret sharing scheme (LSSS) access structure $(M, \rho)$ as inputs. It outputs a ciphertext \blue{$CT$}.
    \item $\textbf{TKGen}(SK) \rightarrow TK, z$.  The algorithm takes the secret key \blue{$SK$} as input. It chooses a random value $z\in \mathbb{Z}_{p}^{*}$ and outputs a transform key \blue{$TK$} and a retrieve key \blue{$z$}.
    \item $\textbf{Transform}(CT, TK)\rightarrow CT'$. The algorithm takes the transform key \blue{$TK$} and the ciphertext \blue{$CT$} as inputs. If the transform key does not satisfy the access structure, the algorithm outputs $\perp$, else it outputs a transform ciphertext \blue{$CT'$}. 
    \item $\textbf{Decrypt}_{user}(CT', z) \rightarrow m$. The algorithm takes the transform ciphertext \blue{$CT'$} and the retrieve key \blue{$z$} as inputs. It outputs a message \blue{$m$}.
\end{itemize}
\end{definition}

\subsection{Registered Attribute-Based Encryption}
\begin{definition}
    (\textbf{Registered Attribute-Based Encryption}\cite{hohenberger2023registered}).
    A RABE scheme with security parameter $\lambda$, attribute universe $\mathcal{U}$, and policy space $\mathcal{P}$ is a tuple of algorithms $\prod_{RABE} =(\textbf{Setup},\textbf{KeyGen},\textbf{RegPK},$
    $\textbf{Encrypt},\textbf{Update},\textbf{Decrypt}):$
    \begin{itemize}
        \item $\textbf{Setup}(1^{\lambda}, 1^{|\mathcal{U}|}, 1^{L}) \rightarrow crs$. The algorithm takes the security parameter $\lambda$, the attribute universe $\mathcal{U}$, and the number of users $L$ as inputs and generates a common reference string \blue{$crs$} as output.
        \item $\textbf{KeyGen}(crs, aux) \rightarrow (pk,sk)$. The algorithm takes the \blue{$crs$} and a state $aux$ as inputs. It outputs a public/secret key pair $(pk,sk)$.
        \item $\textbf{RegPK}(crs, aux, pk, S_{pk}) \rightarrow mpk, aux'$. The algorithm takes the \blue{$crs$}, a state \blue{$aux$}, and a public key \blue{$pk$} with corresponding attributes $S_{pk} \subseteq \mathcal{U}$ as inputs. It outputs the master public key \blue{$mpk$} and an updated state \blue{$aux'$}.
        \item $\textbf{Encrypt}(mpk, P, m) \rightarrow ct$. The algorithm takes the master public key \blue{$mpk$}, an access policy $P \in \mathcal{P}$, and a message $m$ as inputs. It outputs a ciphertext \blue{$ct$}.
        \item $\textbf{Update}(crs,aux,pk) \rightarrow hsk$. The algorithm takes the \blue{$crs$}, a state \blue{$aux$}, and the public key \blue{$pk$} as inputs. It outputs a helper decryption key \blue{$hsk$}.
        \item $\textbf{Decrypt}(sk, hsk, ct) \rightarrow m \cup \{\perp, \textbf{GetUpdate}\}$. The algorithm takes the secret key \blue{$sk$}, the helper decryption key \blue{$hsk$}, and the ciphertext \blue{$ct$} as inputs. It either outputs a message \blue{$m$}, a decryption failure symbol $\perp$, or a flag \textbf{GetUpdate} that indicates an updated $hsk$ is required.
    \end{itemize}
\end{definition}

\subsection{Sigma Protocol for Discrete Logarithm Equality}
$\mathcal{R}_{DLEQ} = \left\{ (g,G,t,T), s| G=g^s \land T=t^s \right\}$ is the discrete logarithm equality (DLEQ) relation.
A sigma protocol $\prod_{DLEQ}$ for DLEQ is defined as follows.
\begin{itemize}
    \item The prover chooses $r \overset{R}{\leftarrow} \mathbb{Z}_p$ and sends $C_1 = g^r$ and $C_2 = t^r$ to verifier.  
    \item The verifier samples $c \overset{R}{\leftarrow} \mathbb{Z}_p$ and sends it to the prover. 
    \item The prover answers with $z = r+cs$.
    \item The verifier accepts if $g^z = C_1G^c$ and $t^z = C_2T^c.$
\end{itemize}
The property of $\prod_{DLEQ}$ sigma protocol:
\begin{itemize}
    \item \textbf{Completeness.} For any $(\blue{x}, \blue{w}) \in \mathcal{R}$, the prover convinces the verifier to accept with probability 1.
    \item \textbf{Special Soundness.} \blue{Given} any $\blue{x}$ and two accepting conversations on input $\blue{x}$: $(C_1, C_2, c, z)$, $(C_1, C_2, c', z')$, where $c \neq c'$, one can efficiently compute $\blue{w}$ such that $(\blue{x}, \blue{w}) \in \mathcal{R}$.
    \item \textbf{Special Honest-Verifier Zero Knowledge.} There exists a polynomial time simulator that, given $\blue{x}$ and a random challenge $c$, can output an accepting conversation of the form $(C_1, C_2, c, z)$ with the same probability distribution as conversations between the honest prover and verifier on input $\blue{x}$. 
\end{itemize}

The protocol can be made non-interactive using the Fiat-Shamir transformation\cite{fiat1986prove} with a random oracle $H$. $\prod_{DLEQ-NIZK}$ is defined as follows:
\begin{itemize}
    \item $\textbf{Prove}(\blue{x},\blue{w})\rightarrow \pi$. On input $\blue{x}=(g,G,t,T)$ and $\blue{w} = s$, the algorithm samples $r \overset{R}{\leftarrow} \mathbb{Z}_p$ and computes: $$C_1 = g^r, C_2 = t^r, c = H(G, T, C_1,C_2), z = r+cs.$$ Finally, it outputs a proof $\pi = (C_1, C_2, c, z)$.
    \item $\textbf{Verify}(\blue{x},\pi)\rightarrow 0/1$. On input $\blue{x}=(g,G,t,T)$ and proof $\pi = (C_1, C_2, c, z)$, the algorithm checks: $$c \overset{?}{=} H(G, T, C_1, C_2), g^z \overset{?}{=} C_1G^c, t^z \overset{?}{=} C_2T^c.$$ 
    It outputs 1 if all equations hold, indicating that the proof is valid, and 0 otherwise.
\end{itemize}

\subsection{Ethereum Blockchain}
Ethereum\cite{wood2014ethereum} is a decentralized and open-source blockchain platform that incorporates smart contract functionality. \blue{One of the key characteristics of Ethereum is decentralization. Ethereum network is not controlled by a central authority and the decentralized consensus mechanism allows Ethereum to provide a trustless environment, ensuring security and immutability. Smart contracts are self-executing programs with the terms of the agreement directly written into code. They automatically execute the predefined actions without the need for a trusted intermediary. Every transaction or computation on Ethereum requires a certain amount of gas, which serves as a fee to incentive validators. The gas mechanism ensures that resources on the network are used efficiently and prevents abuse by requiring users to pay proportionally to the complexity of their operations.}

\section{System Model and Threat Model}
\label{sec:System Overview}
\subsection{System Model}
Our system consists of seven entities: Key Curator (KC), InterPlanetary File System (IPFS), Blockchain, Data Owners (DOs), Data Users (DUs), Decryption Cloud Server (DCS) Network, and Public Verifiers (PVs). These seven entities are defined as follows:
\begin{itemize}
    \item The KC is responsible for user registration and generates the master public key and auxiliary data for users.
    \item IPFS is a distributed file system used to store ciphertexts in our system without a central server.
    \item Blockchain serves as a decentralized and immutable ledger that records decryption tasks, fraud proofs, and facilitates smart contract execution.
    \item DOs encrypt data with specified access structure and upload the ciphertext to IPFS.
    \item DUs generate their own keys to register with KC. They publish outsourced decryption tasks on blockchain and finish the final decryption locally. Besides, they can generate a fraud proof to accuse dishonest DCS.
    \item The DCS network consists of many servers that compete for tasks. These servers compute and submit task result to blockchain. 
    \item PVs verify the fraud proof and publish the verification result on blockchain.
\end{itemize}

\begin{figure}[h]
    \centering
    {\includegraphics[width=0.45\textwidth]{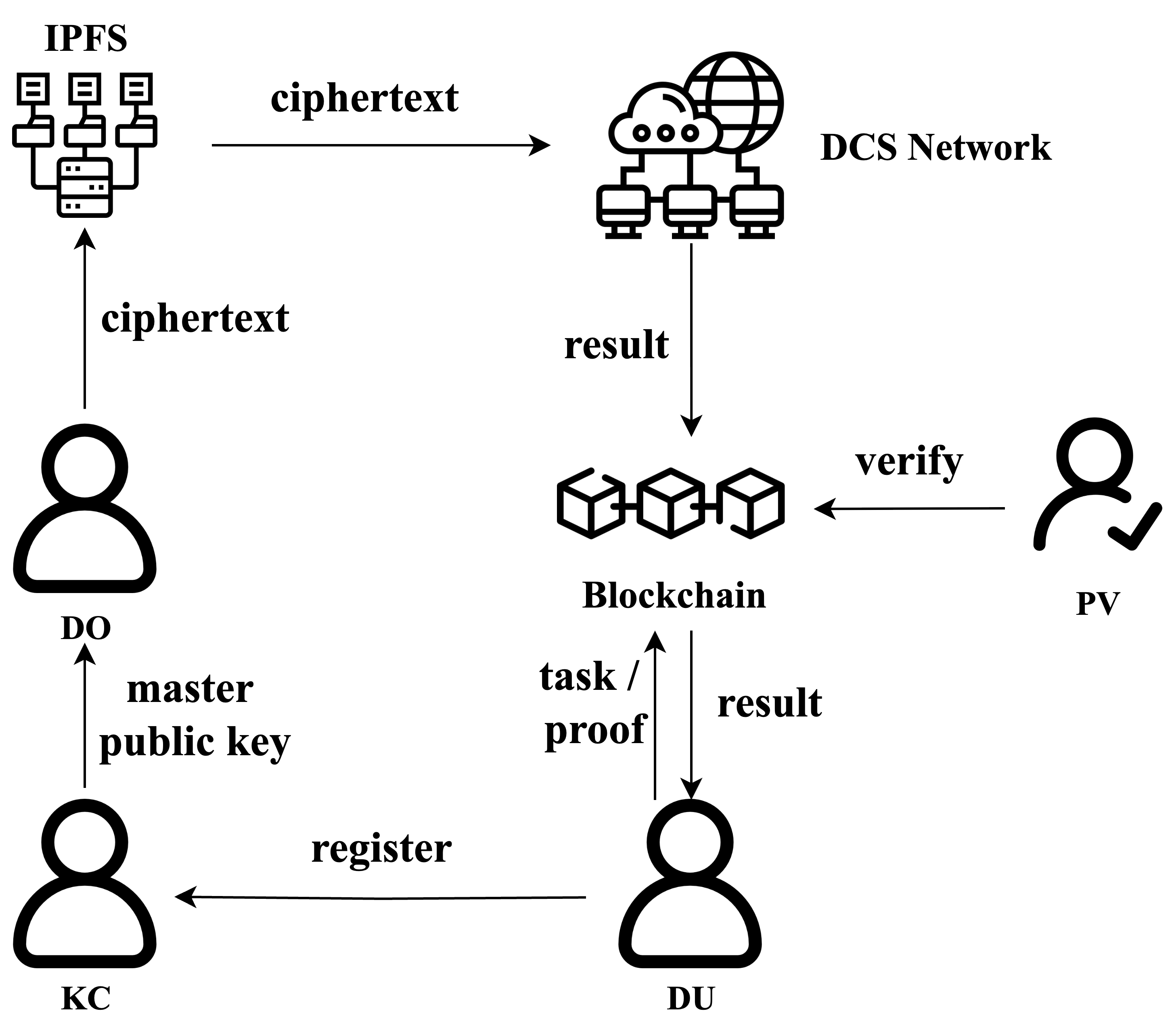}}
    \caption{Brief overview of system entities.}
    \label{fig:model_overview}
\end{figure}

\blue{\Cref{fig:model_overview} illustrates the workflow among all entities in the system model. In brief, the KC stores system public information and is required in user registration. IPFS serves as the data repository to store RABE ciphertext uploaded by the DO. The DU, wishing to decrypt the ciphertext, submits a decryption task on the blockchain and deposits tokens. Any DCS can perform the outsourced decryption and submit the result to the blockchain. In the happy case, the DU retrieves the DO's data and the DCS receives the tokens. In the dispute phase, the deposited tokens are returned to the DU if the fraud proof passes verification by the PV.}

\subsection{Threat Model}
In our system, DOs and IPFS are fully trusted. The KC and PVs are semi-honest and transparent, which means they keep no secrets and perform computations honestly. DCSs are malicious, as they may deliberately submit incorrect results to obtain task rewards. Additionally, since DUs may attempt to submit invalid fraud proof to avoid payment, we also assume they are malicious. The optimistic assumption supposes that all parties are honest unless proven otherwise.

\subsection{Security Requirements}
\begin{itemize}
    \item \textbf{Static Security.} The static security ensures that data is confidential to unauthorized users. 
    
    Let \resizebox{0.95\linewidth}{!}{$\prod_{OSRABE}=(\textbf{Setup},\textbf{KeyGen},\textbf{RegPK},\textbf{Encrypt},\textbf{Transform},$} $\textbf{Decrypt}_{user})$ be a slotted outsourced decryption RABE scheme with security parameter $\lambda$. We define the following game $Game_{sec}$ between a challenger $\mathcal{C}$ and an adversary $\mathcal{A}$:
    \begin{itemize}
        \item \textbf{Setup}: The challenger $\mathcal{C}$ generates \blue{$crs$} and sends it to $\mathcal{A}$. Let the master public key $mpk=\perp$ and the counter \blue{$ctr = 0$}.
        \item \textbf{Query}: The challenger $\mathcal{C}$ responds to the adversary’s key-generation queries on a non-corrupted slot index $i \in N$ with a set of attributes \blue{$S$}. $\mathcal{C}$ computes the counter $ctr = ctr + 1$ and samples $pk_{ctr},sk_{ctr}$ $\leftarrow \textbf{KeyGen}(crs,ctr)$. $\mathcal{C}$ replies to $\mathcal{A}$ with $ctr$ and $pk_{ctr}$.
        Note that in the static security game, the adversary is not allowed to make any corruption queries.
        \item \textbf{Challenge}: The adversary $\mathcal{A}$ selects two messages $\mu_0^*,\mu_1^*$ in message space and an access policy $P^*$ in policy space. The challenger $\mathcal{C}$ calls $\textbf{RegPK}$ to generate $mpk$ and $hsk_i$. Then, $\mathcal{C}$ randomly selects $b\in \{0,1\}$ and replies the challenge ciphertext: $$ct^* \leftarrow \textbf{Encrypt}(mpk, P^*, \mu_b^*).$$ 
        \item \textbf{Output}: $\mathcal{A}$ \blue{outputs} a bit $b' \in \{0,1\}$.
    \end{itemize}
    The advantage of the adversary in $Game_{sec}$ is defined as: 
    
    $\textbf{Pr}[b'= b] - \frac{1}{2}$.
    
    \begin{definition}
    \label{static secure defination}
        An OSRABE scheme is secure if all probabilistic polynomial time (PPT) adversaries have at most a negligible advantage in the above game: $$\textbf{Pr}[b'= b] - \frac{1}{2} \leq negl(\lambda)$$
    \end{definition}
    \item \textbf{Verifiability of outsourced decryption.} The verifiability guarantees that the correctness of transform ciphertext can be efficiently checked by authorized users.
    \item \textbf{Exemptibility of outsourced decryption.}
    The exemptibility guarantees that users cannot falsely accuse the DCS of returning the invalid transform ciphertext.
    \item \textbf{Fairness.} 
    The fairness property in our work guarantees that the DCS gets paid if and only if it has returned a correct result, while the DU must pay for the correct result and is entitled to challenge an incorrect one through fraud proof. Fairness incentivizes each party to perform the protocol honestly.
    \item \textbf{Auditability.} The auditability guarantees the behaviors of the transparent and semi-honest participants (i.e., KC and PV) can be publicly audited.
\end{itemize}

\section{Construction}
\label{sec:Construction}
\subsection{Verifiable Outsourced Decryption Slotted RABE}
\label{subsec: Outsourced Slotted Registered ABE}
A slotted RABE (SRABE) scheme keeps a fixed number of slots and does not support dynamic registrations. Based on scheme \cite{garg2024reducing}, we propose a SRABE with verifiable outsourced decryption (OSRABE) scheme, which is defined as follows:
\begin{itemize}
    \item $\textbf{Setup}(1^{\lambda}, 1^{|\mathcal{U}|}, 1^{L}) \rightarrow crs$. On input the security parameter $\lambda$, the attribute universe $\mathcal{U}$, and the number of slots $L$, the algorithm samples a prime order pairing group: $(\mathbb{G}, \mathbb{G}_T, g, e)$. Let $\mathcal{D} = \{d_i\}_{i \in [1,L]}$ be a progression-free and double-free set. Next, it defines: $$f(i,j) := d_i + d_j \quad \text{and} \quad \mathcal{E} = \{f(i,j)| i,j \in [1,L]:i \neq j\}.$$ For $i \in [1,L]$, it samples $a, b,\gamma_i \overset{R}{\leftarrow} \mathbb{Z}_p$ and sets $t_i = a^{d_i}, \alpha = -a^{3 \cdot d_{max}}$. Then, it computes: $$h = \prod_{i \in [1,L]} g^{a^{3 \cdot  d_{max} - d_i}}, A_i = g^{t_i}, B_i = g^{\alpha}h^{t_i}, P_i = g^{\gamma_i},U_i = g^{bt_i}$$ let $Z = e(g,g)^{\alpha}$ and $W_z = g^{ba^z}$ for each $z\in \mathcal{E}$. Finally, it sets $pp=(\mathbb{G}, \mathbb{G}_T, g, e, Z, h)$ and outputs the common reference string $crs:$ $$crs = (pp,\{(A_i, B_i,P_i,U_i)\}_{i \in [1,L]}, \{W_z\}_{z\in \mathcal{E}}).$$

    \item $\textbf{KeyGen}(crs, i) \rightarrow pk_i, sk_i$. On input the \blue{$crs$} and a slot index $i$, the algorithm first selects $r_i \overset{R}{\leftarrow} \mathbb{Z}_p$ and then computes: $$T_i = g^{r_i}, Q_i = P_i^{r_i}, V_{j,i} = A_j^{r_i}(j \neq i).$$ Finally, it outputs the public key $pk_i$ and the secret key $sk_i:$ $$pk_i = (T_i, Q_i, \{V_{j,i}\}_{j\neq i})\text{ , } sk_i=r_i.$$
    \item $\textbf{RegPK}(crs, \{(pk_i,S_i)\}_{i\in[1,L]})\rightarrow mpk, \{hsk_i\}_{i\in[1,L]}$. On input the \blue{$crs$} and a list of public key-attribute pairs, the algorithm computes $\hat{T}$ and $\hat{V_i}$ for each $i \in [1,L]$: $$\hat{T} = \prod_{j \in [1,L]}T_j \quad, \quad \hat{V}_i = \prod_{j \neq i}V_{i, j}.$$ 
    For each attribute $w \in \mathcal{U}$, it computes $\hat{U_w}$ and $\hat{W_{i,w}}$ for each $i \in [1,L]$: $$\hat{U}_w = \prod_{j \in [1,L]: w \not\in S_j}U_j \quad, \quad \hat{W}_{i,w} = \prod_{j \neq i:w \not\in S_j}W_{f(i, j)}.$$  Finally, it selects three hash functions $H_0, H_1, H_2$ and outputs the master public key \blue{$mpk$} and the helper decryption key $hsk_i$ for each $i \in [1,L]:$
    $$mpk = (pp,\hat{T}, \{\hat{U}_w\}_{w \in \mathcal{U}}, H_0, H_1, H_2),$$ $$\{hsk_i = (i, S_i, A_i, B_i,\hat{V}_i, \{\hat{W}_{i,w}\}_{w \in \mathcal{U}})\}_{i\in[1,L]}.$$
    \item $\textbf{Encrypt}(mpk, (\textbf{M},\rho), m) \rightarrow ct .$ On input the master public key \blue{$mpk$}, the access structure $(\textbf{M},\rho)$ where $\textbf{M}$ is a $\beta \times n$ matrix and $\rho$ maps $\textbf{M}_i$ (i-th row of $\textbf{M}$) to an attribute $\rho(i)$, and a message \blue{$m$}, the algorithm first samples $\mu \in \mathbb{G}_T$ and computes $t_2$ as symmetric key. \textbf{Enc} is a symmetric encryption algorithm. Then, it samples $s,v_2, ...,v_n \overset{R}{\leftarrow} \mathbb{Z}_p$ and $h_1,h_2 \overset{R}{\leftarrow} \mathbb{G}$ such that $h=h_1h_2$ and sets $v = [s,v_2,...,v_n]^T$. For each $k\in[1,\beta]$, it samples $s_k \overset{R}{\leftarrow} \mathbb{Z}_p$ and constructs the ciphertext as follows:
    $$t_1 = H_0(\mu), t_2 = H_1(\mu), C = \textbf{Enc}_{t_2}(m), tag = H_2(t_1||C)$$
    $$C_1 = \mu \cdot Z^s, C_2 = g^s$$ $$C_{3,k} = h_2^{m_k^Tv}\hat{U}_{\rho(k)}^{-s_k}, C_{4,k} = g^{s_k},C_5 = (h_1\hat{T}^{-1})^s$$
    Finally, it outputs the ciphertext $ct:$
    $$ct = ((\textbf{M},\rho), C, C_1, C_2,\{C_{3,k},C_{4,k}\}_{k\in[1,\beta]}, C_5, tag).$$
    \item $\textbf{Transform}(hsk, ct)$  $\rightarrow ct'.$ On input the ciphertext $ct$ and the helper decryption key $hsk$, the algorithm outputs $\perp$ if the set of attributes in $hsk$ does not satisfy the access structure in $ct$. Otherwise, it sets $I=\{j:\rho(j)\in S_i\}$ and computes the set $\{\omega_{j}| \omega_{j} \in \mathbb{Z}_p\}$ such that $\sum_{j\in I} \omega_{j}\textbf{M}_{j} = (1,0,...,0)$.
    Next, it computes:
        \begin{multline*}
        C_1' = \frac{C_1}{e(C_2, B_i)} \cdot e(C_5, A_i)\cdot e(C_2, \hat{V_i}) \cdot \\
        \prod_{j \in I} \big( e(C_{3, j}, A_i) \cdot e(C_{4, j}, \hat{W}_{i, \rho(j)}) \big)^{\omega_{j}}
        \end{multline*}
        $$C_2' = e(C_2, A_i)$$
    Finally, it outputs the transform ciphertext $ct' = (C_1', C_2')$.
    \item $\textbf{Decrypt}_{user}(sk_i, ct', ct)$ $\rightarrow$ $m/\perp.$ On input the original ciphertext $ct$, the transform ciphertext $ct'$, and the secret key $sk_i = r_i$, the algorithm computes:$$\mu' = C_1' \cdot C_2'^{r_i}.$$ Then, it checks: $$ tag \overset{?}{=} H_2(H_0(\mu')||C).$$
    If the equation does not hold, the algorithm outputs $\perp$, \blue{which indicates the transform ciphertext $ct'$ is invalid}. Otherwise, it uses the symmetric decryption algorithm \textbf{Dec} to recover message $m = \textbf{Dec}_{H_1(\mu')}(C)$ and outputs $m$.
\end{itemize}

\subsection{Verifiable Outsourced Decryption RABE}
 Hohenberger et al.\cite{hohenberger2023registered} constructed RABE using multiple SRABE instances. We use the same transformation to construct our ORABE scheme based on the OSRABE scheme described in Section \ref{subsec: Outsourced Slotted Registered ABE}. Our verifiable ORABE is defined as follows:
\begin{itemize}
    \item $\textbf{Setup}(1^{\lambda},1^{|\mathcal{U}|}, 1^{L}) \rightarrow crs$. On input the security parameter $\lambda$, the attribute universe $\mathcal{U}$, and the limit number of users $L=2^l$, the algorithm initializes $l+1$ OSRABE instances and generates $crs_k \leftarrow \textbf{OSRABE.Setup}(1^{\lambda}, 1^{\mathcal{U}}, 1^{2^k})$ for each $k \in [0,l].$ Finally, it outputs: $$crs = (crs_0,...,crs_l).$$
    \item $\textbf{KeyGen}(crs, aux) \rightarrow pk, sk$. On input the $crs$ and the auxiliary data $aux=(ctr,Dict_1,Dict_2,mpk)$, the algorithm generates public/secret key for each OSRABE instance. For each $k \in [0,l]$, it computes slot index $i_k = (ctr\text{ mod } 2^k)+1$ and generates $(pk_k, sk_k) \leftarrow \textbf{OSRABE.KeyGen}(crs_k, i_k).$ Finally, it outputs: $$pk = (ctr,pk_0,...,pk_l),$$ $$sk = (ctr,sk_0,...,sk_l).$$
    
    \item $\textbf{RegPK}(crs,aux,pk,S_{pk}) \rightarrow mpk', aux'$. On input the $crs = (crs_0,...,crs_l)$, the auxiliary data $aux=(ctr_{aux}, Dict_1,Dict_2,$ $mpk)$, where $mpk =(ctr_{aux},mpk_0,...,mpk_l)$, public key $pk = (ctr_{pk},pk_0,...,pk_l)$, and associated attribute set $S_{pk}$, the algorithm executes as follows:
    \begin{itemize}
        \item For each $k \in [0,l]$, it computes $i_k = (ctr_{aux} \text{ mod } 2^k)+1 $ for the k-th scheme. \blue{It terminates if $ctr_{aux} \neq ctr_{pk}$.}
        \item For each $k \in [0,l]$, it sets $Dict_1[k,i_k]=(pk,S_{pk}).$ If $i_k=2^k$, it generates $(mpk'_k, \{hsk'_{k,i}\}_{i\in[1,2^k]})$ {$\leftarrow \textbf{OSRABE.}$} $\textbf{RegPK}(crs_k,Dict_1[k])$ and then updates $Dict_2[ctr+i+1-2^k,k]=hsk'_{k,i}$ for each $i\in[1,2^k].$
    \end{itemize}
    Finally, it outputs: $$mpk' = (ctr_{aux}+1,mpk'_0,...,mpk'_l),$$ $$aux' =(ctr_{aux}+1, Dict_1,Dict_2,mpk').$$
    \item $\textbf{Encrypt}(mpk, (\textbf{M},\rho), m) \rightarrow ct .$ On input the master public key $mpk = (ctr,mpk_0,...,mpk_l)$, the access structure $(\textbf{M},\rho)$ and a message \blue{$m$}, for each $k \in [0,l]$, the algorithm computes: $$ct_k \leftarrow \textbf{OSRABE.Encrypt}(mpk_k,(\textbf{M},\rho), m)$$ If $mpk_k=\perp$, then it sets $ct_k = \perp$. It outputs: $$ct=(ctr,ct_0,...,ct_l).$$
    
    \item $\textbf{Transform}(hsk_k, ct_k)$  $\rightarrow  ct'.$ On input the ciphertext $ct_k$ and the helper decryption key $hsk_k$, the algorithm computes: $$ct' \leftarrow \textbf{OSRABE.Transform}(hsk_k, ct_k).$$ It outputs $ct'$.

    \item $\textbf{Update}(crs,aux,pk) \rightarrow hsk$: On input the common reference string $crs = (crs_0,...,crs_l)$, the auxiliary data $aux=(ctr_{aux}, Dict_1,Dict_2,mpk)$, and the public key $pk=(ctr_{pk},$ 
    
    $pk_0,...,pk_l)$. If $ctr_{pk} <  ctr_{aux}$, it sets $hsk_k=Dict_2[ctr_{pk}+1,k]$ for each $k \in [0,l]$; otherwise, it outputs $\perp$. Finally, it outputs: $$hsk=(hsk_0,...,hsk_l).$$
    
    \item $\textbf{Decrypt}_{user}(sk_k, ct', ct_k)$ $\rightarrow$ $m/\perp.$ On input the transform ciphertext $ct'$, secret key $sk_k$, and original ciphertext $ct_k$, the algorithm \blue{verifies $ct'$ and computes $m$}: $$m/\perp \leftarrow\textbf{OSRABE.}\textbf{Decrypt}_{user}(sk_k, ct', ct_k).$$
    Finally, \blue{it outputs $m$ if the verification of $ct'$ is successful. Otherwise, it outputs $\perp$, which indicates the transform ciphertext $ct'$ is invalid.}
\end{itemize}
\begin{figure*}[h]
    \centering
    \includegraphics[width=\textwidth]{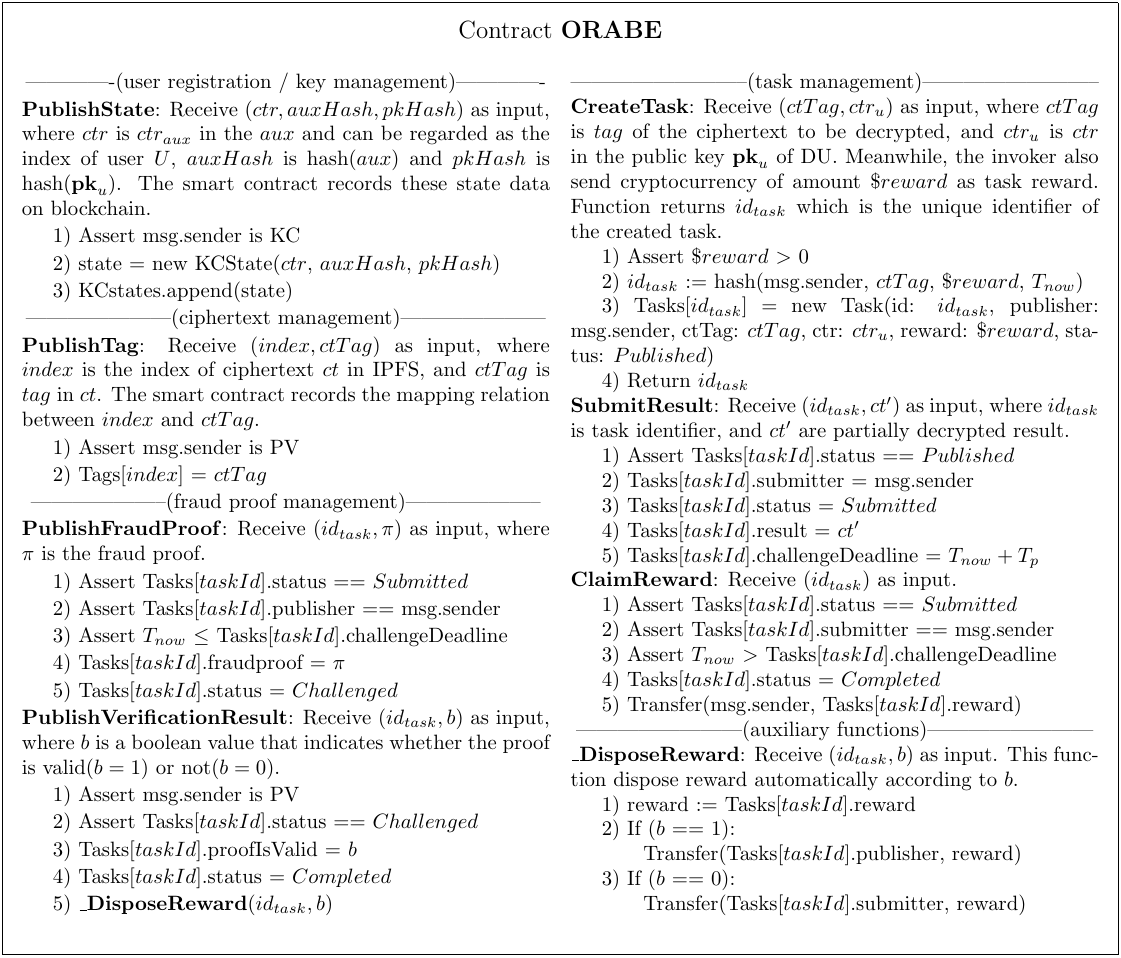}
    \caption{Contract design for ORABE.}
    \label{fig:Contract}
\end{figure*}
\subsection{Exemptibility via NIZK Fraud Proof}
The fraud proof mechanism enables challengers to dispute the previous result by proving its incorrectness. If the decrypt algorithm outputs $\perp$, it indicates that the transform ciphertext is invalid. In this case, the DU could generate a valid fraud proof to accuse the DCS. Our fraud proof is used to prove the transform ciphertext is not correct. This not only provides a mechanism for the DU to demonstrate malicious behavior of the DCS but also ensures that the DCS cannot be maliciously accused, which achieves exemptibility of the outsourced decryption. More specifically, we establish an accountability mechanism based on the following two critical points:
\begin{itemize}
    \item The DU can claim malicious behavior of the DCS through fraud proof, which proves the transform ciphertext with the secret key cannot recover the original plaintext. This mechanism protects the DU from potential economic losses.
    \item It also guarantees exemptibility by preventing false accusations through the NIZK proof that \blue{validates} decryption failures without revealing secret keys. This method effectively resolves the exemptibility vulnerability identified in \cite{ma2015verifiable}, where malicious users could potentially fabricate accusations by revealing invalid private keys. We achieve exemptibility through DLEQ NIZK protocol rather than key exposure, thereby maintaining secret key confidentiality while ensuring fairness.
\end{itemize}

The fraud proof mechanism based on the DLEQ NIZK proof is designed as follows:
\begin{itemize}
    \item $\textbf{Prove}(sk, mpk, ct') \rightarrow \pi$: On input the master public key \blue{$mpk$}, the secret key \blue{$sk$}, and the transform ciphertext $ct'=(C_1',C_2')$, the algorithm computes the verification key $vk = {C'_2}^{sk}$ and generates a NIZK proof $\pi'$ to prove the correctness of $vk$: $$\pi'\leftarrow \prod_{\text{DLEQ-NIZK}}.\textbf{prove}(C'_2, vk, g, g^{sk})$$ Note that $C'_2$ and $pk = g^{sk}$ are public. Finally, it outputs the fraud proof: $$\pi = (\pi', vk).$$
    \item $\textbf{Verify}(\pi, ct', mpk, tag) \rightarrow 0/1$:
    On input the fraud proof $\pi$, the transform ciphertext $ct'$, the master public key \blue{$mpk$}, and the ciphertext tag, the algorithm first checks the correctness of $vk$: $$\prod_{\text{DLEQ-NIZK}}.\textbf{Verify}(\pi', (C'_2, vk, g, T_i))$$ If the proof is invalid, it outputs 0. Otherwise, it computes: $$\mu'' = C_1' \cdot vk.$$ Then, it checks: $$tag \overset{?}{=} H_2(H_0(\mu'')||C).$$ If the equation does not hold, it outputs 1 to indicate the fraud proof is valid (i.e., the transform ciphertext is invalid), and 0 otherwise. 
\end{itemize}

\begin{figure}[h]
    \centering
    \includegraphics[width=0.5\textwidth]{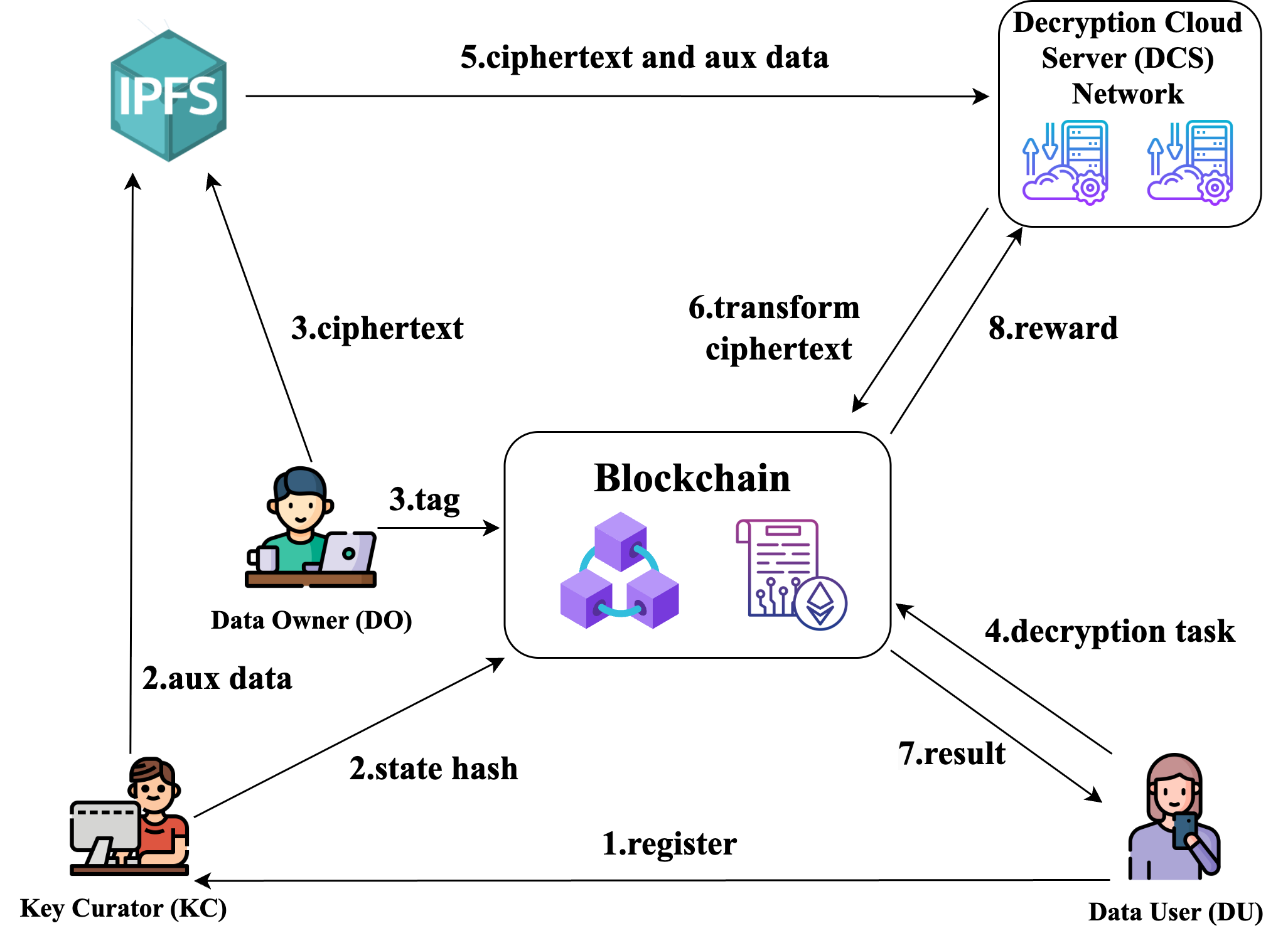}
    \caption{Overview of happy case.}
    \label{fig:happy_overview}
\end{figure}

\noindent\textbf{Discussion.} Unfortunately, this verify algorithm cannot be executed by Ethereum smart contract because it does not support $\mathbb{G}_t$ computation of any elliptic curve now. Therefore, we introduce a semi-honest and transparent public verifier to verify the fraud proof and publish the verification result on blockchain. Ge et al. \cite{ge2023attribute} added precompile contract to native Ethereum to support pairing and $\mathbb{G}_t$ computation on their own blockchain. If we reduce the reliance on native Ethereum security, we can integrate their approach to eliminate the public verifier role and enable the verification of fraud proof through smart contracts directly. In this work, we make a trade-off and prioritize the security of native Ethereum. Note that in our ORABE scheme, the public verifier role can be played by KC, which is semi-honest and transparent.

\subsection{Smart Contract} \label{subsec: Smart Contract ORABE}
The smart contract design is shown in \Cref{fig:Contract}, which manages the whole procedure of user registration, ciphertext tag publication, outsourced decryption, and public verification of fraud proof. Each DU first sends $\mathbf{pk}_u$ and attributes to the KC. On receiving message from each DU, the KC allocates a suitable $ctr$ to the DU and updates the auxiliary data $aux$. To make sure that the KC's behavior is auditable, the KC should invoke \textbf{PublishState} when it finishes registration. The state data consists of $ctr$, hash($\mathbf{pk}_u$), and hash($aux$), which ensures the state transformation can be verified publicly. Ciphertexts are stored in IPFS, and the DO should invoke \textbf{PublishTag} after uploading ciphertext. Regarding outsourced decryption, the DU first publishes an outsourced decryption task by invoking \textbf{CreateTask} with some tokens as the reward. Then, the DCS performs partial decryption and invokes \textbf{SubmitResult} on the blockchain. In the happy case, the DCS could invoke \textbf{ClaimReward} after the challenge window. In the dispute phase, the DU can generate a fraud proof and invoke \textbf{PublishFraudProof} during the challenge window. The PV verifies the fraud proof and invokes \textbf{PublishVerificationResult} on blockchain. 

\subsection{Use Case} \label{subsec: Use case}
We propose a decentralized data sharing framework based on the auditable and reliable ORABE scheme under the optimistic assumption. The optimistic assumption supposes that all parties are honest unless proven otherwise. The dispute phase and public verifier are needed only when one player crashes or attempts to cheat.
\Cref{fig:happy_overview} illustrates the overview of the happy case without challenge, which consists of the following steps:
\begin{enumerate}
    \item The DU generates secret key and public key locally and sends public key and attributes to the KC.
    \item The KC computes and updates the auxiliary data on IPFS, and publishes new state hash and public key hash on blockchain.
    \item The DO encrypts data under a key encapsulation mechanism, uploads ciphertext to IPFS, and publishes ciphertext tag on-chain.
    \item The DU publishes an outsourced decryption task with the reward on-chain. 
    \item The DCS observes the task and downloads the corresponding ciphertext and auxiliary data from IPFS.
    \item The DCS computes and submits the transform ciphertext to blo-
    ckchain.
    \item The DU gets the transform ciphertext from blockchain and completes the decryption locally.
    \item The DCS claims the task reward from blockchain after the challenge window.
\end{enumerate} 

After the final decryption, the DU verifies whether the plaintext matches the ciphertext tag. If the verification fails, as shown in \Cref{fig:dispute_overview}, the DU enters the dispute phase, which consists of the following four steps:
\begin{enumerate}
    \item The DU generates a fraud proof and publishes it on-chain to accuse the DCS within the challenge window.
    \item The PV obtains the fraud proof from blockchain.
    \item The PV verifies the fraud proof and publishes the verification result on blockchain.
    \item The smart contract returns the deposit to the DU if the fraud proof is valid. Otherwise, the DCS receives the payment.
\end{enumerate}

\begin{figure}[h]
    \centering
    \includegraphics[width=0.5\textwidth]{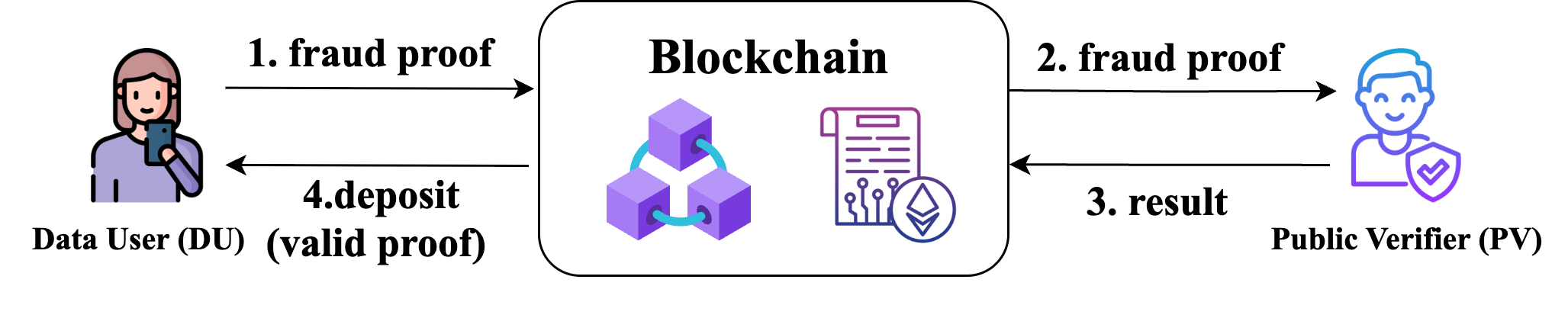}
    \caption{Overview of dispute case.}
    \label{fig:dispute_overview}
\end{figure}

\section{Security Analysis}
\label{sec:Security Analysis}
\begin{theorem}\label{theorem1}\textbf{(OSRABE Static Security)} Suppose the SRABE scheme of \cite{garg2024reducing} is statically secure, then the proposed OSRABE scheme is also statically secure.
\end{theorem}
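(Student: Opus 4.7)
The plan is a standard hybrid reduction that isolates the wrapper OSRABE.Encrypt places around the underlying SRABE. An OSRABE ciphertext can be viewed as an SRABE encryption of a uniformly random $\mu \in \mathbb{G}_T$, together with a hash-derived KEM/DEM wrapper $(C, \mathit{tag}) = (\textbf{Enc}_{H_1(\mu)}(m),\; H_2(H_0(\mu)\,\|\,C))$ whose contents depend only on $\mu$ and $m$ and not on the access policy or on any registered key. Hence, modelling $H_0,H_1,H_2$ as random oracles and $\textbf{Enc}$ as an IND-CPA symmetric scheme with a uniform key, an unauthorized adversary's sole handle on the plaintext is through the SRABE-encapsulated $\mu$, which should let a reduction to the static security of \cite{garg2024reducing} go through.

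Concretely, I would introduce four hybrids: $H_0$ is the real challenge on $\mu_0^*$; $H_1$ replaces the SRABE component by an encryption of a fresh independent $\widetilde\mu$, while the wrapper $(C,\mathit{tag})$ is still built from the original $\mu$; $H_2$ is identical to $H_1$ except that the payload inside $C$ is switched to $\mu_1^*$; and $H_3$ is the real challenge on $\mu_1^*$. The hops $H_0\approx H_1$ and $H_2\approx H_3$ each reduce directly to SRABE static security: a simulator $\mathcal{B}$ relays the $crs$ and every KeyGen query between the SRABE challenger and the OSRABE adversary $\mathcal{A}$ (transparent because OSRABE.KeyGen is literally SRABE.KeyGen), samples a fresh $\mu$ locally, submits $(\mu,\widetilde\mu,P^*)$ as its own challenge, and attaches to the returned SRABE ciphertext a wrapper built from its local $\mu$; the SRABE challenger's bit then selects between the two hybrids, so any distinguishing advantage of $\mathcal{A}$ translates into the same advantage against the Garg et al.\ scheme.

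The middle hop $H_1 \approx H_2$ is where the wrapper analysis lives. In both hybrids, $\mathcal{A}$'s entire view (the CRS, the registered public keys, and an SRABE ciphertext on $\widetilde\mu$) is statistically independent of $\mu$, so in the random-oracle model the probability that $\mathcal{A}$ ever queries $H_0$ or $H_1$ on $\mu$ is bounded by $q/|\mathbb{G}_T|$ by a union bound over the at most $q$ oracle queries. Conditioned on that bad event not occurring, $H_1(\mu)$ is a uniform and private symmetric key, so IND-CPA security of $\textbf{Enc}$ makes $C$ indistinguishable between $\mu_0^*$ and $\mu_1^*$; moreover, $H_0(\mu)$ is also uniform and private, so $H_2(H_0(\mu)\,\|\,C)$ is statistically close to random in $\mathcal{A}$'s view and leaks nothing about the payload. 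Summing the three negligible hybrid gaps yields the bound required by \Cref{static secure defination}.

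The step I expect to require the most care is this middle hop: I must argue cleanly that the $\mu$ embedded inside the wrapper stays hidden from $\mathcal{A}$ even though it is fed into two random oracles whose outputs are published. My plan is an explicit bad-event analysis, bounding the probability that $\mathcal{A}$ ever evaluates $H_0$ or $H_1$ at $\mu$, exploiting that $\mu$ is uniform in the exponentially large group $\mathbb{G}_T$ and only leaks through its own oracle outputs; once that bad event is excluded, every other component of the hybrid chain reduces mechanically, either via the relay simulator above to SRABE static security, or to IND-CPA security of the symmetric scheme.
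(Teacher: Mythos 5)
Your proposal is correct, and it is structurally more careful than the argument the paper actually gives. The paper's proof is a single one-shot relay reduction: the simulator $\mathcal{B}$ forwards the $crs$ and all key-generation queries to the SRABE challenger, forwards the challenge pair directly, returns the challenger's ciphertext to $\mathcal{A}$, and asserts that the simulation is perfect — it never separates out the hash-derived wrapper $(C,\mathit{tag})=(\textbf{Enc}_{H_1(\mu)}(m),H_2(H_0(\mu)\,\|\,C))$ from the SRABE encapsulation of the random $\mu$, and as written it treats the challenged messages as if they were the values fed to the SRABE encryption. Your four-hybrid decomposition makes explicit exactly what that relay argument silently assumes: the two outer hops are the paper's relay reduction applied to the encapsulated $\mu$, and the middle hop supplies the missing justification that switching the payload inside $C$ is undetectable, via a random-oracle bad-event bound of $q/|\mathbb{G}_T|$ on queries to $H_0,H_1$ at $\mu$ plus IND-CPA of $\textbf{Enc}$. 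What your route buys is rigor and an honest accounting of the assumptions the theorem statement omits (random oracles for $H_0,H_1,H_2$ and IND-CPA security of the symmetric scheme, neither of which appears in the paper's hypothesis); what the paper's route buys is brevity at the cost of glossing over the KEM/DEM layer entirely. The only cosmetic issue is your reuse of $H_0,H_1,H_2$ as hybrid names, which collides with the scheme's hash functions and should be renamed.
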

\begin{proof}\label{prooftheorem1}
    Suppose there exists a PPT adversary $\mathcal{A}$ that can attack our scheme in the CPA security model with advantage $\epsilon$, then we can build a simulator $\mathcal{B}$ that can break the SRABE construction 4.3 in \cite{garg2024reducing} with advantage $\epsilon$.
    
    \textbf{Setup.} The challenger $\mathcal{C}$ generates \blue{$crs$} using the setup algorithm. The simulator $\mathcal{B}$ gets \blue{$crs$} from $\mathcal{C}$ and returns it to the adversary $\mathcal{A}$. Let the master public key $mpk=\perp$ and the counter \blue{$ctr = 0$}. 

    \textbf{Query.} The simulator $\mathcal{B}$ answers the adversary $\mathcal{A}$'s key-generation queries on a non-corrupted slot index $i \in N$ with a set of attributes \blue{$S$} as follows:
    \begin{itemize}
        \item The simulator $\mathcal{B}$ passes the query to the challenger $\mathcal{C}$.
        \item $\mathcal{C}$ computes the counter $ctr = ctr + 1$ and samples $pk_{ctr},sk_{ctr}\leftarrow \textbf{KeyGen}(crs,ctr)$. $\mathcal{C}$ replies to $\mathcal{B}$ with $(ctr, pk_{ctr}).$
        \item $\mathcal{B}$ passes $(ctr, pk_{ctr})$ to $\mathcal{A}$.
    \end{itemize}

    \textbf{Challenge.} $\mathcal{A}$ submits a plaintext pair $(K_0^*, K_1^*)\in \{0,1\}^{2\times k}$ to $\mathcal{B}$, and $\mathcal{B}$ sends them to the challenger $\mathcal{C}$. $\mathcal{C}$ generates the challenge ciphertext pair $(CT_0^*, CT_1^*)$ corresponds to $(K_0^*, K_1^*)$ and sends $CT_b^*(b \in\{0,1\})$ to $\mathcal{B}$.

    \textbf{Guess.} $\mathcal{B}$ passes ciphertext $CT_b^*$ to $\mathcal{A}$ and gets the response bit 0 or 1. $\mathcal{B}$ outputs $\mathcal{A}$'s response as its guess.

    Note that the simulation of $\mathcal{B}$ is perfect if the above game does not abort and the construction 4.3 of \cite{garg2024reducing} has been proven statically secure. The advantage of the simulator breaks the security of the construction in \cite{garg2024reducing} is the same as the adversary that breaks the CPA security of our scheme. Thus, our proposed scheme \ref{subsec: Outsourced Slotted Registered ABE} is statically secure.
\end{proof}

\begin{table*}
\centering
\resizebox{\textwidth}{!}{
\begin{threeparttable}
\caption{Comparison of Pairing-Based RABE Schemes}
    \begin{tabular}{lccccccll}
    \hline
    \textbf{Schemes} & $|\textbf{\textit{crs}}|$ & $|\textbf{\textit{st}}|$ & $|\textbf{\textit{mpk}}|$ & $|\textbf{\textit{hsk}}|$ & $|\textbf{\textit{ct}}|$ & \textbf{Decryption Cost} & \textbf{Assumption} & \textbf{Security} \\
    \hline
    \cite{hohenberger2023registered} & $|\mathcal{U}|L^2$ & $|\mathcal{U}|L^2$ & $|\mathcal{U}|$ & $|\mathcal{U}|$ & $|\mathcal{P}|$ &$|\mathcal{P}|(\mathbb{P} + \mathbb{E}_T)$ & Composite-order & Adaptive \\
    \cite{garg2024reducing} & $L^{(1 + o(1))}$ & $|\mathcal{U}|L$ & $|\mathcal{U}|$ & $|\mathcal{U}|$ & $|\mathcal{P}|$ & $|\mathcal{P}|(\mathbb{P} + \mathbb{E}_T)$ & Prime-Order & Static \\
    \cite{ZZGQ2023registered} & $|\mathcal{U}|L^2$ & $|\mathcal{U}|L^2$ & $|\mathcal{U}|$ & $|\mathcal{U}|$ & $|\mathcal{P}|$ & $|\mathcal{P}|(\mathbb{P} + \mathbb{E}_T)$ & Prime-Order & Adaptive \\
    \cite{attrapadung2024modular} & $L^2$ & $L^2$ & $\delta^2$ & $\delta^2$ & $|\mathcal{P}|$ & $|\mathcal{P}|(\mathbb{P} + \mathbb{E}_T)$ & Prime-Order & Adaptive \\
    \textbf{Ours} & $L^{(1 + o(1))}$ & $|\mathcal{U}|L$ & $|\mathcal{U}|$ & $|\mathcal{U}|$ & $|\mathcal{P}|$ & $\mathbb{E}_T$ & Prime-Order & Static \\
    \hline
    \end{tabular}
    \label{table:RABE comparison}
    \begin{tablenotes} 
        \item Note: \blue{$st$} denotes the auxiliary data, \blue{$L$} denotes the number of users, $\mathcal{U}$ is attribute universe, and $\mathcal{P}$ is access policy. In scheme \cite{attrapadung2024modular}, $\delta = max_{i \in [1,L]}|S_i|$ ($S_i$ is a registered attribute set). $\mathbb{P}$ denotes the pairing operation, $\mathbb{E}_T$ denotes the $\mathbb{G}_T$ exponentiation operation.
    \end{tablenotes} 
\end{threeparttable}
}
\end{table*}

\begin{theorem}\label{theorem2}\textbf{(ORABE Static Security)}
Based on Theorem \ref{theorem1}, the proposed ORABE scheme is statically secure.
\end{theorem}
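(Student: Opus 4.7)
The plan is to reduce ORABE static security to OSRABE static security via a standard hybrid argument over the $l+1$ component instances, following the spirit of the Hohenberger et al. transformation that was used to lift SRABE to RABE.

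First, I would observe that the ORABE challenge ciphertext has the shape $ct = (ctr, ct_0, \ldots, ct_l)$, where each $ct_k \leftarrow \textbf{OSRABE.Encrypt}(mpk_k,(\mathbf{M},\rho),\mu_b^*)$ is an independent OSRABE encryption of the same challenge message under the $k$-th instance's master public key. I would then define a sequence of hybrids $H_0, H_1, \ldots, H_{l+1}$, where in $H_j$ the components $ct_0, \ldots, ct_{j-1}$ encrypt $\mu_0^*$ and the components $ct_j, \ldots, ct_l$ encrypt $\mu_1^*$ (slots with $mpk_k = \perp$ are set to $\perp$ in every hybrid, as in the real scheme). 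Thus $H_0$ is the distribution when $b=1$ and $H_{l+1}$ is the distribution when $b=0$, and it suffices to show $H_j$ and $H_{j+1}$ are computationally indistinguishable for each $j \in [0,l]$.

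The core reduction step takes a distinguisher between $H_j$ and $H_{j+1}$ and builds a PPT adversary $\mathcal{B}_j$ against the OSRABE static security game on instance $j$. $\mathcal{B}_j$ obtains $crs_j$ from the OSRABE challenger, generates $crs_k \leftarrow \textbf{OSRABE.Setup}(1^{\lambda},1^{|\mathcal{U}|},1^{2^k})$ for every $k \neq j$ internally, assembles $crs=(crs_0,\ldots,crs_l)$, and forwards it to $\mathcal{A}$. For each of $\mathcal{A}$'s key-generation queries on a non-corrupted index with attribute set $S$, $\mathcal{B}_j$ determines the slot index $i_k = (ctr \bmod 2^k)+1$ for each instance exactly as in $\textbf{ORABE.KeyGen}$; it answers the $k\neq j$ components itself via $\textbf{OSRABE.KeyGen}(crs_k,i_k)$, while it relays the query for instance $j$ (slot $i_j$, attribute set $S$) to its own OSRABE challenger and repackages the returned $pk_{j}$ with the self-generated components. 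When $\mathcal{A}$ issues the challenge $(\mu_0^*,\mu_1^*,P^*)$, $\mathcal{B}_j$ sends $(\mu_0^*,\mu_1^*,P^*)$ to the OSRABE challenger to receive $ct_j^*$; it then encrypts $\mu_0^*$ under $mpk_k$ for $k<j$ and $\mu_1^*$ under $mpk_k$ for $k>j$ using the self-generated state, and returns $(ctr, ct_0^*,\ldots,ct_l^*)$ to $\mathcal{A}$. Finally, $\mathcal{B}_j$ outputs $\mathcal{A}$'s guess bit.

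A standard inspection shows that when the OSRABE challenger's hidden bit is $1$, $\mathcal{B}_j$ perfectly simulates $H_j$, and when it is $0$ it perfectly simulates $H_{j+1}$, so $\mathcal{B}_j$'s advantage equals $\mathcal{A}$'s distinguishing advantage between consecutive hybrids. Summing over $j=0,\ldots,l$ gives an overall bound of $(l+1)\cdot\text{negl}(\lambda)$, which remains negligible, yielding the theorem. The main obstacle I expect is purely bookkeeping: keeping the slot indices $i_k = (ctr\bmod 2^k)+1$ consistent across the internally simulated instances and the externally embedded instance $j$, and ensuring that in the hybrids the access structure $P^*$ does not become satisfiable by any adversarially registered slot of instance $j$ (this follows from the restriction that $\mathcal{A}$ only queries non-corrupted slots in the static game, which is inherited unchanged from the underlying SRABE/OSRABE security notion). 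Apart from this indexing overhead, the argument is a routine hybrid reduction to Theorem~\ref{theorem1}.
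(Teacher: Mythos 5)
Your proposal is correct and follows essentially the same route as the paper: the paper's own proof simply invokes the fact (from Garg et al.) that the generic SRABE-to-RABE transformation preserves static security, so that ORABE inherits it from the OSRABE scheme of Theorem~\ref{theorem1}. Your hybrid argument over the $l+1$ component instances is exactly the standard argument underlying that cited transformation, just written out explicitly rather than deferred to the reference.
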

\begin{proof}\label{prooftheorem2}
    While \cite{hohenberger2023registered} only analyzes the security of the RABE scheme transformed from a fully secure SRABE scheme, the security analysis of the RABE scheme transformed from statically-secure SRABE scheme is proposed in \cite{garg2024reducing}. 
    Therefore, our ORABE scheme inherits the statically secure property from the OSRABE scheme. 
\end{proof}

\begin{theorem}\label{theorem3}
\textbf{(Verifiability)} Assume that $H_0^*$, $H_2^*$ are collision-resistant hash functions, the proposed ORABE scheme achieves the verifiability.
\end{theorem}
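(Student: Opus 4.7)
The plan is to reduce verifiability to the collision-resistance of $H_0$ and $H_2$ via a direct game-based argument. First I would formalize the verifiability game: a PPT adversary $\mathcal{A}$ playing the role of the DCS is given $mpk$ together with a challenge ciphertext $ct$ for a target authorized slot $i$, and outputs a transform ciphertext $ct^{\star} = (\tilde{C}_1', \tilde{C}_2')$; $\mathcal{A}$ wins if the user's check $tag \overset{?}{=} H_2(H_0(\mu'')||C)$ succeeds on $\mu'' = \tilde{C}_1' \cdot \tilde{C}_2'^{r_i}$, yet $\mu'' \neq \mu$, where $\mu \in \mathbb{G}_T$ is the mask actually encrypted. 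The goal is to show that any non-negligible advantage here yields a collision-finder against $H_0$ or $H_2$.

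The reduction itself rests on a two-case analysis of a winning $\mathcal{A}$. Since the honest $tag$ equals $H_2(H_0(\mu)||C)$, verification passing with $\mu'' \neq \mu$ forces $H_2(H_0(\mu'')||C) = H_2(H_0(\mu)||C)$. If $H_0(\mu'') = H_0(\mu)$, then $(\mu, \mu'')$ is a collision for $H_0$; otherwise the concatenations $H_0(\mu'')||C$ and $H_0(\mu)||C$ are distinct inputs whose $H_2$-images coincide, giving a collision for $H_2$. Either way the extraction is deterministic and efficient given $\mathcal{A}$'s output and the simulator's knowledge of $\mu$ (which it has, as it generated the challenge ciphertext), so the verifiability advantage is bounded by the sum of the collision-resistance advantages against $H_0$ and $H_2$.

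Lifting from OSRABE to the full ORABE construction is immediate: each sub-instance $ct_k$ carries its own $tag_k$, and the ORABE user routine invokes OSRABE verification on the single instance actually used for transformation, so a union bound over the at most $l+1$ relevant instances suffices. I expect the main subtlety to lie not in the reduction itself, which is tight and one-shot, but in pinning down the verifiability game so that ``incorrect transform ciphertext'' is captured precisely by $\mu'' \neq \mu$; this is the natural formulation because the user's downstream symmetric decryption touches $\mu''$ only through $H_1(\mu'')$, and the tag check is a function of $\mu''$ and $C$ alone, so any malformed $ct^{\star}$ that causes the user to accept must already yield a hash collision for $H_0$ or $H_2$, exactly what the reduction outputs.
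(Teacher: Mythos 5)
Your proposal is correct and follows essentially the same route as the paper: both arguments observe that a passing tag check on an incorrect transform ciphertext yields two distinct masks $\mu$ with $H_2(H_0(\mu_1)\|C) = H_2(H_0(\mu_2)\|C)$, and then case-split on whether the collision occurs in $H_0$ or in $H_2$. Your version merely adds a more explicit game formalization and the routine union bound over the $l+1$ OSRABE sub-instances, which the paper leaves implicit.
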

\begin{proof}\label{prooftheorem3}
    The verifiability is broken if DCS can return two different transform ciphertexts to match the original tag.
    Then, we can get: $$H_2^*(H_0^*(\mu_1^*)|| C) = tag = H_2^*(H_0^*(\mu_2^*)|| C),\text{where }\mu_1^* \neq \mu_2^*.$$
    If $H_0^*(\mu_1^*) = H_0^*(\mu_2^*)$, then we can break the collision-resistance of $H_0^*$, else we can break the collision-resistance of $H_2^*$.
\end{proof}

\begin{theorem}\label{theorem4}
\textbf{(Exemptibility)} Suppose that the soundness of protocol $\prod_{DLEQ-NIZK}$ holds, the proposed ORABE scheme achieves the exemptibility.
\end{theorem}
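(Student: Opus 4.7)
The plan is to prove exemptibility by contradiction under the optimistic assumption: suppose the DCS has returned a correctly computed transform ciphertext $ct' = (C_1', C_2')$, and yet a malicious DU (adversary $\mathcal{A}$) produces a fraud proof $\pi = (\pi', vk)$ that the \textbf{Verify} algorithm accepts. I would then derive a contradiction by combining the correctness of the OSRABE \textbf{Transform} algorithm with the soundness of $\prod_{DLEQ-NIZK}$.

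First I would unpack what ``correctly computed'' means: by correctness of \textbf{Transform} together with \textbf{Decrypt}$_{user}$, evaluating $C_1' \cdot (C_2')^{r_i}$ with the DU's registered secret key $sk = r_i$ recovers exactly the blinding factor $\mu$ chosen by the DO during \textbf{Encrypt}, and hence $tag = H_2(H_0(\mu)\,\|\,C)$ holds. Next, the hypothesis that \textbf{Verify} outputs $1$ means two things: (i) the DLEQ-NIZK proof $\pi'$ for the statement $(C_2', vk, g, T_i)$ verifies, where $T_i = g^{sk}$ is the DU's registered public key; and (ii) the tag equation $tag = H_2(H_0(\mu'')\,\|\,C)$ fails for $\mu'' = C_1' \cdot vk$.

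By the soundness of $\prod_{DLEQ-NIZK}$, an accepting proof for $(C_2', vk, g, T_i)$ implies the existence of a witness $s \in \mathbb{Z}_p$ with $vk = (C_2')^s$ and $T_i = g^s$. Since $T_i = g^{sk}$ is the registered public key fixed on the blockchain before the task, the discrete logarithm pins down $s = sk$, and therefore $vk = (C_2')^{sk}$. Substituting this into $\mu''$ yields $\mu'' = C_1' \cdot (C_2')^{sk} = \mu$ by the first step, and so $H_2(H_0(\mu'')\,\|\,C) = H_2(H_0(\mu)\,\|\,C) = tag$. This directly contradicts condition (ii), so no PPT adversary can produce a fraud proof that is accepted when the DCS was honest, except with the soundness error of $\prod_{DLEQ-NIZK}$, which is negligible.

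The main obstacle I anticipate is cleanly binding the witness extracted from soundness to the DU's registered key rather than an adversarially chosen one. This is handled by the fact that the verification statement explicitly uses the on-chain public key $T_i$ (not a value supplied inside $\pi$), so the adversary cannot substitute a different secret key to manufacture a mismatch without breaking DLEQ-NIZK soundness; the exemptibility flaw of~\cite{ma2015verifiable} (where a DU could reveal an invalid private key) is thereby avoided. A minor subtlety to mention is that the argument treats the symmetric tag check as deterministic in $\mu$, so correctness of \textbf{Transform} suffices to conclude the tag equation from the recovered blinding factor, with no further cryptographic assumption needed beyond those already invoked for \Cref{theorem3}.
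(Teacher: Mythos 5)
Your proposal is correct and rests on the same key fact as the paper's own proof: the soundness of $\prod_{DLEQ-NIZK}$ with respect to the fixed on-chain statement components $(C_2', g, T_i)$ pins the verification key to the unique honest value $vk = (C_2')^{sk}$. The paper expresses this only as the impossibility of producing two distinct valid $vk$'s and stops there, whereas you carry the argument through to the explicit contradiction with the tag check via correctness of \textbf{Transform}; yours is a more complete writeup of essentially the same route.
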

\begin{proof}\label{prooftheorem4}
    The exemptibility is broken if DU can return two different valid verification keys $vk_1$ and $vk_2$ in two fraud proofs.
    Then, we can get:
    $$1 \leftarrow \prod_{DLEQ-NIZK}.\textbf{Verify}(\pi_1, (C_2', vk_1, g, T_i))$$ 
    $$1 \leftarrow \prod_{DLEQ-NIZK}.\textbf{Verify}(\pi_2, (C_2', vk_2, g, T_i))$$
    where $vk_1 \neq vk_2.$
    Thus we can break the soundness of $\prod_{DLEQ-NIZK}$ protocol.
\end{proof}

\begin{theorem}\label{theorem5}
\textbf{(Fairness)} The proposed scheme based on Ethereum block-chain ensures fairness.
\end{theorem}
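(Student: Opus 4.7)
The plan is to prove fairness by case analysis on the DCS's behavior, combined with the on-chain enforcement of the payment logic in \Cref{fig:Contract}. Because Ethereum guarantees deterministic and immutable execution of the smart contract, it suffices to show that, in every case, the final disbursement branch selected by the contract is consistent with the fairness definition, namely that the DCS receives the reward iff it submitted a correct transform ciphertext.

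First, I would enumerate the two branches the contract can enter after \textbf{SubmitResult}: either the challenge window elapses with no \textbf{PublishFraudProof} call (in which case \textbf{ClaimReward} transfers the deposit to the DCS), or a fraud proof is posted and the PV's \textbf{PublishVerificationResult} dictates whether the deposit is refunded to the DU or released to the DCS. Second, I would handle the honest-DCS case by appealing to Theorem \ref{theorem4}: any fraud proof the DU could construct against a correct $ct'$ would force a collision between two valid verification keys $vk$ for the same $(C_2', T_i)$, contradicting the soundness of $\prod_{DLEQ\text{-}NIZK}$. Hence the PV outputs $0$, the contract pays the DCS, and no honest server is penalized. Third, I would handle the malicious-DCS case by appealing to Theorem \ref{theorem3}: an incorrect $ct'$ gives $\mu' \ne \mu$, and the DU can honestly compute $vk = C_2'^{sk}$ together with a DLEQ-NIZK proof $\pi'$ so that $tag \ne H_2(H_0(C_1' \cdot vk) \| C)$; the PV, being semi-honest, will verify the proof truthfully, and the contract refunds the DU, so the malicious DCS receives nothing. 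Completeness of $\prod_{DLEQ\text{-}NIZK}$ and the semi-honesty of the PV guarantee that neither step can be silently blocked.

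The main obstacle I anticipate is the strict ``only if'' direction under the optimistic assumption: if a lazy or offline DU never submits a fraud proof against an incorrect $ct'$, the contract will still pay the DCS after the challenge window. I would argue that this is the standard optimistic-rollup notion of fairness, justified by rationality (the DU's expected loss from a missed challenge exceeds the gas cost of \textbf{PublishFraudProof}), and is already implicit in the threat model stated in Section \ref{sec:System Overview}. A secondary subtlety is ensuring the contract cannot be manipulated into inconsistent states, for instance by a DCS that front-runs a pending fraud proof with \textbf{ClaimReward}, or by a DU that posts multiple fraud proofs; these are ruled out by the challenge-window guard and one-shot flags that are part of the contract design in \Cref{subsec: Smart Contract ORABE}, and I would state them explicitly as invariants before concluding the proof.
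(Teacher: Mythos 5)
Your proposal is correct and follows essentially the same route as the paper: a case analysis on whether the DCS returns a correct result, reducing the honest-DCS branch to exemptibility (Theorem \ref{theorem4}), the malicious-DCS branch to verifiability (Theorem \ref{theorem3}) plus the NIZK fraud proof, and relying on Ethereum's consensus and contract execution for enforcement. Your additional observations about the lazy/offline DU, front-running, and one-shot contract invariants go beyond what the paper's (quite informal) proof addresses, but they refine rather than replace its argument.
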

\begin{proof}\label{prooftheorem5}
The fairness of our scheme ensures that the DCS gets paid if and only if it returns a correct result. The DU must deposit tokens into the smart contract to submit the outsourced decryption task. If the DCS submits the correct result, it will receive the reward at the end of the challenge window, which is ensured by exemptibility. Otherwise, the malicious behavior will be detected and proved by the DU, which is ensured by verifiability and NIZK proof, and the DCS get nothing but pay for the transaction gas fee for dishonesty. Since our payments are made through Ethereum smart contracts, we also need to rely on Ethereum consensus protocol. Thus the proposed scheme achieves the fairness property as long as the verifiability, exemptibility, and Ethereum consensus protocol are guaranteed.
\end{proof}

\begin{theorem}\label{theorem6}
\textbf{(Auditability)} The proposed scheme based on publicly verifiable NIZK proof and Ethereum blockchain ensures auditability.
\end{theorem}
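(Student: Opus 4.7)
The plan is to establish auditability by separately analyzing the two semi-honest parties, KC and PV, and showing that every action of theirs that can affect the protocol's correctness leaves a publicly verifiable record on Ethereum. The proof will rely on three ingredients already available in the construction: the immutability of the Ethereum ledger (guaranteed by its consensus protocol), the collision resistance of the hash functions used inside \textbf{PublishState}, and the public verifiability of $\prod_{DLEQ-NIZK}$.

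First I would handle the KC. Every time the KC finishes processing a registration request it is required to invoke \textbf{PublishState}, which posts the tuple $(ctr, \mathrm{hash}(\mathbf{pk}_u), \mathrm{hash}(aux))$ on-chain, while the full $aux$ and $\mathbf{pk}_u$ are stored on IPFS. I would argue that any auditor can download $\mathbf{pk}_u$ and the claimed $aux$ from IPFS, recompute their hashes, and compare against the on-chain digest; by collision resistance, a mismatch proves that the KC published a state inconsistent with the data it distributed, and a match pins down exactly which $aux$ the KC committed to. Furthermore, given two consecutive committed states $(aux, aux')$ and the registered $(pk, S_{pk})$, the auditor can rerun \textbf{RegPK} deterministically and verify that the KC performed the prescribed computation; otherwise the discrepancy is itself publicly observable evidence of misbehavior.

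Next I would handle the PV. The PV's only protocol action is to invoke \textbf{PublishVerificationResult} after running the fraud-proof verifier on $(\pi, ct', mpk, tag)$, all of which are themselves on-chain (the fraud proof via \textbf{PublishFraudProof}, the transform ciphertext via \textbf{SubmitResult}, the tag via \textbf{PublishTag}, and $mpk$ via \textbf{PublishState}). Since the verifier of $\prod_{DLEQ-NIZK}$ is deterministic and public, and the subsequent equality check $tag \stackrel{?}{=} H_2(H_0(\mu'')\|C)$ uses only public values, any third party can recompute the verification outcome from on-chain data alone and compare it with the PV's posted bit; any inconsistency is immediately detectable and attributable.

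The main obstacle, as I see it, is arguing that there is no off-chain ``silent'' behavior of KC or PV that could affect honest users without leaving a trace. I would address this by observing that the contract in \Cref{fig:Contract} is the only gateway through which KC's state transitions and PV's verdicts can influence the economic and cryptographic state of the system (reward release, deposit return, helper-key distribution pointer), so any behavior that bypasses \textbf{PublishState} or \textbf{PublishVerificationResult} is simply ignored by honest participants and cannot cause harm. Combining the three observations, every consequential action of KC and PV is both recorded immutably and recomputable by anyone, which is precisely the auditability requirement.
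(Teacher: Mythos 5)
Your proposal is correct and follows essentially the same approach as the paper's proof: auditability of the KC via the on-chain hashed registration state and of the PV via the publicly verifiable, deterministic NIZK verification, both resting on the transparency and immutability of the Ethereum ledger. Your version is simply a more detailed elaboration (adding explicit appeals to hash collision resistance, recomputability from on-chain/IPFS data, and the absence of consequential off-chain actions) of the same argument the paper gives informally.
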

\begin{proof}\label{prooftheorem6}
    The auditability ensures the auditing of the malicious behaviors of parties, especially the semi-honest KC and PV in our framework. On the one hand, since our framework is based on decentralized blockchain Ethereum, the transactions and records are transparent and immutable. On the other hand, the hash of registration state related to the KC and the NIZK proof related to the PV are publicly verifiable and stored on-chain. Therefore, the behaviors of the KC and PV are publicly auditable at any time.
\end{proof}

\section{Theoretical Analysis and Evaluation}
\label{sec:Performance And Evaluation}

\subsection{Theoretical Analysis}
\Cref{table:RABE comparison} shows the comparison between some existing RABE schemes and our scheme. Since our scheme is built upon \cite{garg2024reducing}, the parameters are identical to those of \cite{garg2024reducing}, except for the decryption cost. Furthermore, our scheme achieves superior parameter complexity compared to other schemes\cite{hohenberger2023registered, ZZGQ2023registered, attrapadung2024modular}. Additionally, our outsourced decryption significantly reduces the final decryption cost to a constant level, eliminating the need for pairing operations and requiring only a single group exponentiation operation.

\begin{table*}
\centering
\caption{Comparison with OABE Schemes}
\resizebox{\textwidth}{!}{
\begin{tabular}{cccccccc}
\hline
\textbf{Schemes} & \textbf{Authority-Free} & \textbf{Verifiability} & \textbf{Exemptibility} & \textbf{Fairness} & \textbf{Auditability} & \makecell{\textbf{Decryption} \\ \textbf{Cost}} \\  %
\hline
\cite{lai2013attribute} & \XSolidBrush & \Checkmark & \XSolidBrush & \XSolidBrush & \XSolidBrush & 2$\mathbb{E}_G$ + 2$\mathbb{E}_T$ \\  
\cite{li2013securely,qin2015attribute} & \XSolidBrush & \Checkmark & \XSolidBrush & \XSolidBrush & \XSolidBrush & $\mathbb{E}_T$ \\  %
\cite{lin2015revisiting} & \XSolidBrush & \Checkmark & \XSolidBrush & \XSolidBrush & \XSolidBrush & 2$\mathbb{E}_G$ + $\mathbb{E}_T$ \\  
\cite{ma2015verifiable} & \XSolidBrush & \Checkmark & \XSolidBrush & \XSolidBrush & \XSolidBrush & 2$\mathbb{E}_G$ + $\mathbb{E}_T$ \\  %
\cite{ge2023attribute} & \XSolidBrush & \Checkmark & \Checkmark & \Checkmark & \XSolidBrush & $\mathbb{E}_T$\\  %
\textbf{Ours} & \Checkmark & \Checkmark & \Checkmark & \Checkmark &\Checkmark & $\mathbb{E}_T$\\  %
\hline
\end{tabular}
}
\label{table:OABE comparison}
\end{table*}

\Cref{table:OABE comparison} shows the comparison of the previous OABE schemes with our scheme in terms of authority, verifiability, exemptibility, fairness, auditability, and decryption cost. Our scheme is the only one that does not rely on a trusted authority and achieves auditability. Moreover, our scheme achieves the lowest decryption overhead among all the compared schemes. Although all schemes ensure verifiability, only our scheme and \cite{ge2023attribute} guarantee exemptibility and fairness.

\subsection{Experimental Evaluation}
\label{subsec:evaluation}
Our experiment is executed on an Intel(R) Core(TM) i9-12900H CPU and 4 GB RAM running Ubuntu 20.04 LTS 64-bit. The access policy is set as an AND gate of attributes and its size varies from 10 to 100, with a step size of 10.

\noindent \textbf{ORABE Cost.}
We conduct experiments on our scheme and the RABE scheme of \cite{garg2024reducing} and implement them in Python language within the Charm framework. Our benchmark uses SS512 and SS1024 elliptic curves and the SHA-256 hash function. For each algorithm, we repeat 1,000 times to get the average execution time. 

\begin{figure}[!htp]
    \begin{minipage}[c]{0.24\textwidth}
    \centering
    \scalebox{0.5}{\begin{tikzpicture}[thick]
    \begin{axis}[
      xticklabel style={font=\large\bfseries},
      yticklabel style={font=\large\bfseries},
      legend style={at={(0.5,-0.35)},
      legend pos=north west,
      },
      ymajorgrids=true,
      grid style=dashed,
      xlabel={\large\bfseries Number of attributes},
      ylabel={\large\bfseries Time cost (ms)}
      ]
    \addplot[color=red,thick, mark=o]  coordinates{
        (10, 51.23)
        (20, 53.32)
        (30, 76.12)
        (40, 97.17)
        (50, 115.95)
        (60, 133.01)
        (70, 167.92)
        (80, 178.35)
        (90, 206.70)
        (100, 258.58)};
     \addplot[color=blue,thick, mark=o]  coordinates{        
        (10, 378.10)
        (20, 626.76)
        (30, 926.72)
        (40, 1219.73)
        (50, 1528.07)
        (60, 1794.11)
        (70, 2086.85)
        (80, 2403.15)
        (90, 2690.84)
        (100, 2998.74)};     
     \addplot[color=red,thick, mark=x]  coordinates{(10, 11.85) (20, 20.57) (30, 29.86) (40, 39.57) (50, 49.11) (60, 58.38) (70, 67.02) (80, 76.75) (90, 86.22) (100, 95.14)};
     \addplot[color=blue,thick, mark=x]  coordinates{ (10, 305.54) (20, 567.28) (30, 818.26) (40, 1088.62) (50, 1336.43) (60, 1627.02) (70, 1860.09) (80, 2127.15) (90, 2357.32) (100, 2647.87)};     
    \legend{SS512 enc, SS1024 enc, SS512 dec, SS1024 dec}
    \end{axis}
    \end{tikzpicture}}
    \caption{Encrypt and decrypt of \cite{garg2024reducing} \label{fig:sRABE_enc_dec_cost}}
  \end{minipage}
  \begin{minipage}[c]{0.24\textwidth}
    \centering
    \scalebox{0.5}{\begin{tikzpicture}[thick]
    \begin{axis}[
      xticklabel style={font=\large\bfseries},
      yticklabel style={font=\large\bfseries},
      legend style={at={(0.5,-0.35)},
      legend pos=north west,
      },
      ymajorgrids=true,
      grid style=dashed,
      xlabel={\large\bfseries Number of attributes},
      ylabel={\large\bfseries Time cost (ms)},
      ymax = 1.4
      ]
    
     \addplot[color=red,thick, mark=o]  coordinates{(10, 0.06) (20, 0.06) (30, 0.06)  (40, 0.06) (50, 0.06)  (60, 0.06) (70, 0.06) (80, 0.06) (90, 0.06) (100, 0.06)};
     \addplot[color=blue,thick, mark=x]  coordinates{(10, 1.01) (20, 0.98) (30, 1.05) (40, 1.04) (50, 1.03) (60, 1.03) (70, 1.00) (80, 1.01) (90, 1.04) (100, 1.01) };     
    \legend{SS512, SS1024}
    \end{axis}
    \end{tikzpicture}}
    \caption{ORABE $\textbf{Decrypt}_{user}$ \label{fig:OSRABE_dec_cost}}
  \end{minipage}
\end{figure}

As shown in \Cref{fig:sRABE_enc_dec_cost}, the encryption and decryption time cost of the RABE scheme in \cite{garg2024reducing} increases linearly with the number of attributes. \Cref{fig:OSRABE_dec_cost} demonstrates our scheme has very low and constant decryption time cost, which is 196.5$\times$-1585$\times$ \blue{on the} SS512 curve and 304$\times$-2647$\times$ on the SS1024 curve lower than \cite{garg2024reducing} in the number of attributes from 10 to 100. As shown in \Cref{fig:original ciphertext} and \Cref{fig:transform ciphertext}, the size of the original ciphertext increases linearly with the number of attributes, while the transform ciphertext, which contains only two $\mathbb{G}_T$ elements, has constant size, thereby significantly reducing the storage burden on users.

\begin{figure}[!htp]
    \begin{minipage}[c]{0.24\textwidth}
    \centering
    \scalebox{0.5}{\begin{tikzpicture}[thick]
    \begin{axis}[
      xticklabel style={font=\large\bfseries},
      yticklabel style={font=\large\bfseries},
      legend style={at={(0.5,-0.35)},
      legend pos=north west,
      },
      ymajorgrids=true,
      grid style=dashed,
      xlabel={\large\bfseries Number of attributes},
      ylabel={\large\bfseries Size (kb)}
      ]
     \addplot[color=red,thick, mark=o]  coordinates{
        (10, 1.52)
        (20, 3.04)
        (30, 4.56)
        (40, 6.08)
        (50, 7.60)
        (60, 9.12)
        (70, 10.64)
        (80, 12.16)
        (90, 13.68)
        (100, 15.20) };
     \addplot[color=blue,thick, mark=x]  coordinates{        
        (10, 5.88)
        (20, 11.76)
        (30, 17.64)
        (40, 23.52)
        (50, 29.40)
        (60, 35.28)
        (70, 41.16)
        (80, 47.04)
        (90, 52.92)
        (100, 58.80)};     
    \legend{SS512, SS1024}
    \end{axis}
    \end{tikzpicture}}
    \caption{Original ciphertext \label{fig:original ciphertext}}
  \end{minipage}
    \begin{minipage}[c]{0.24\textwidth}
    \centering
    \scalebox{0.5}{\begin{tikzpicture}[thick]
    \begin{axis}[
      xticklabel style={font=\large\bfseries},
      yticklabel style={font=\large\bfseries},
      legend style={at={(0.5,-0.35)},
      legend pos=north west,
      },
      ymajorgrids=true,
      grid style=dashed,
      xlabel={\large\bfseries Number of attributes},
      ylabel={\large\bfseries Size (kb)},
      ymax = 0.65
      ]
    
     \addplot[color=red,thick, mark=o]  coordinates{ (10, 0.25)
        (20, 0.25)
        (30, 0.25)
        (40, 0.25)
        (50, 0.25)
        (60, 0.25)
        (70, 0.25)
        (80, 0.25)
        (90, 0.25)
        (100, 0.25) };
     \addplot[color=blue,thick, mark=x]  coordinates{ (10, 0.52)
        (20, 0.52)
        (30, 0.52)
        (40, 0.52)
        (50, 0.52)
        (60, 0.52)
        (70, 0.52)
        (80, 0.52)
        (90, 0.52)
        (100, 0.52) };     
    \legend{SS512, SS1024}
    \end{axis}
    \end{tikzpicture}}
    \caption{Transform ciphertext \label{fig:transform ciphertext}}
  \end{minipage}
\end{figure}

\noindent \textbf{Fraud Proof Cost.}
The proving and verifying times of the fraud proof are approximately 0.87ms and 1.49ms on the SS512 curve and 11.81ms and 21.40ms on the SS1024 curve, respectively. These results demonstrate that the fraud proof mechanism is efficient for mobile devices, particularly on the SS512 curve.

\noindent \textbf{Gas Usage.}
We implement our ORABE smart contract described in Section~\ref{subsec: Smart Contract ORABE} in solidity. We deploy it on a locally set up Ethereum test network powered by Hardhat and invoke each function 100 times to get the average gas usage. 

\begin{figure}[!htp]
   \begin{minipage}{0.24\textwidth}
    \centering
    \scalebox{0.5}{\begin{tikzpicture}
        \begin{axis}[
            ybar,
            bar width=8pt,
            ylabel={\large\bfseries Gas},
            symbolic x coords={publishTag, publishState, createTask, submitResult, publishFraudProof, publishVerificationResult},
            xticklabel style={rotate=45},
            yticklabel style={font=\large\bfseries},
            ymin=0,
            ymax=1300000,
            legend style={at={(0.5,-0.35)},
              legend pos=north west,
              },
        ]
        \addplot[fill=red!50] coordinates {
            (publishTag, 46878.6) (publishState, 115771.4) (createTask, 255235) (submitResult, 288775.4) (publishFraudProof, 624495.8) (publishVerificationResult, 60204)
        };

        \addplot[fill=blue!50] coordinates {
            (publishTag, 46881) (publishState, 115769) (createTask, 255442.8) (submitResult, 519676) (publishFraudProof, 1199325.6) (publishVerificationResult, 60202.8)
        };
        \legend{SS512, SS1024}
        \end{axis}
    \end{tikzpicture}}
    \caption{Gas of our contract}
    \label{fig:gas_usage_contract}
\end{minipage}
  \begin{minipage}[c]{0.24\textwidth}
    \centering
    \scalebox{0.5}{\begin{tikzpicture}[thick]
            \begin{axis}[
                xlabel={\large\bfseries Number of attributes},
                ylabel={\large\bfseries Gas},
                xticklabel style={font=\large\bfseries},
                yticklabel style={font=\large\bfseries},
                legend style={at={(0.5,-0.35)},
                legend pos=north west,}
            ]
            \addplot[color=red, thick, mark=o] coordinates {
                (10, 600000) (20, 600000) (30, 600000) (40, 600000) (50, 600000)
                (60, 600000) (70, 600000) (80, 600000) (90, 600000) (100, 600000)
            };
            \addplot[color=blue, thick, mark=x] coordinates {
                (10, 1280000) (20, 1280000) (30, 1280000) (40, 1280000) (50, 1280000)
                (60, 1280000) (70, 1280000) (80, 1280000) (90, 1280000) (100, 1280000)
            };
             \addplot[color=orange, thick, mark=triangle] coordinates {
                (10, 8695124) (20, 17028838) (30, 25362373) (40, 33695848) (50, 42029576)
                (60, 50463457) (70, 58845637) (80, 67295726) (90, 75508838) (100, 83695124)
            };
            \legend{Happy case of ours, Challenge case of ours, Scheme [25]}
            \end{axis}
        \end{tikzpicture}}
        \caption{Gas comparison with \cite{ge2023attribute}}
        \label{fig:gas compare [17]}
  \end{minipage}
\end{figure}

As shown in \Cref{fig:gas_usage_contract}, all functions in our smart contract exhibit constant gas usage. \Cref{fig:gas compare [17]} presents a comparison of gas usage across three scenarios: the happy case and challenge case of our scheme, and the scheme proposed by \cite{ge2023attribute}. Note that the gas usage of pairing operations in \cite{ge2023attribute} cannot be tested on native Ethereum due to incompatibility. Therefore, we adopt the experimental result $168164 + k \times 46308$ reported in their work, where $k$ is the number of pairing operations. The results demonstrate that our scheme maintains constant gas usage and achieves significantly lower usage compared to \cite{ge2023attribute}.

\section{Conclusion}
\label{sec:Conclusion}
In this paper, we propose ORABE, an auditable registered attribute-based encryption scheme with reliable outsourced decryption based on blockchain. Our scheme achieves verifiability through a verifiable tag mechanism and ensures exemptibility via NIZK proofs. Building upon ORABE, we design a decentralized and auditable data sharing framework without a centralized authority. Additionally, we provide a formal security analysis of the proposed scheme. Finally, we compare the theoretical complexity and functionality of our scheme with existing works and evaluate it on Ethereum to demonstrate feasibility and efficiency. In future work, we plan to explore the applications of our proposed scheme in various innovative scenarios within the metaverse, particularly in data sharing of emerging domains such as SocialFi (Social Finance) and GameFi (Game Finance).

\fontsize{8.5pt}{10pt}\selectfont

\section*{Acknowledgement}
This work was supported by National Natural Science Foundation of China (Grant Nos. 62272107 and 62302129), the Project of Shanghai Science and Technology (Grant No. 2024ZX01), Hainan Province Key R\&D plan project (No. ZDYF2024GXJS030), the HK RGC General Research Fund (Nos. 16208120 and 16214121), and the Project of the Yiwu Research Institute of Fudan University.

\section*{Competing Interest}
The authors declare that they have no competing interests or financial conflicts to disclose.

\bibliographystyle{fcs}
\bibliography{main}

\end{document}